\newcommand{\red}{\ensuremath{\mathcal{R}}}
\newcommand{\blue}{\ensuremath{\mathcal{B}}}
\DeclareMathOperator{\CH}{CH}
\renewcommand{\epsilon}{\ensuremath{\varepsilon}}
\title{%
  Geodesic Order Types%
\thanks{
A preliminary version of this paper appeared in the proceedings of the 18th International Conference on Computing and Combinatorics (COCOON'12)\cite{akpv-got-12}. Other than this footnote, this is the extended version that was afterwards published in  in the special issue of Algorithmica~\cite{akpv-got-12j} containing the best papers of COCOON 2012 conference.}
}
\date{}
\author{
   Oswin Aichholzer\thanks{Institute of Software Technology, Graz University of Technology, Graz, Austria, 
	{\tt \{oaich,apilz,bvogt\}@ist.tugraz.at}} \and
   Matias~Korman\thanks{Universitat Polit\`ecnica de Catalunya, Barcelona, Spain 
	{\tt matias.korman@upc.edu}.} \and
   Alexander~Pilz\footnotemark[2] \and
   Birgit~Vogtenhuber\footnotemark[2]
}
\theoremstyle{plain}
\newtheorem{theorem}{Theorem}
\newtheorem{lemma}{Lemma}
\newtheorem{corollary}{Corollary}
\DeclareMathOperator{\cc}{cc}
\begin{document}

\sloppy
\maketitle

\begin{abstract}
The geodesic between two points $a$ and $b$ in the interior of a simple polygon~$P$ is the shortest polygonal path inside $P$ that connects $a$ to $b$.
It is thus the natural generalization of straight line segments on unconstrained point sets to polygonal environments.
In this paper we use this extension to generalize the concept of the order type of a set of points in the Euclidean plane to geodesic order types.
In particular, we show that, for any set $S$ of points and an ordered subset $\blue \subseteq S$ of at least four points, one can always construct a polygon $P$ such that the points of $\blue$ define the geodesic hull of~$S$ w.r.t.~$P$, in the specified order.
Moreover, we show that an abstract order type derived from the dual of the Pappus arrangement can be realized as a geodesic order type.
\end{abstract}

\section{Introduction}

Order types are one of the most fundamental combinatorial descriptions of sets
of points in the plane.  For each triple of points the order type encodes its
orientation and thus reflects most of the combinatorial properties of the given
set. We are interested in how much the order type of a point set changes when
the points lie inside a simple polygon, and the orientation of point triples is
given with respect to the geodesic paths connecting them. As depicted in
\figurename~\ref{fig:intro}, this orientation can change depending on the
polygon. In this paper we develop a generalization of point set order types to
the concept of geodesic order types.

\begin{figure}[htb]
  \centering
  \includegraphics[page=1]{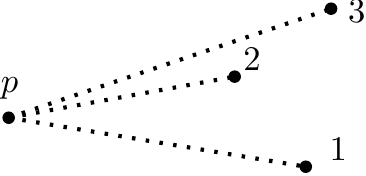}\hspace{2cm}
  \includegraphics[page=2]{intro_1_2}
  \caption{The radial order of shortest paths to points around a point $p$ can be 
different in unconstrained and geodesic settings.}
  \label{fig:intro}
\end{figure}

In set theory, order types impose an equivalence relation between ordered sets.  Two sets have the same \emph{order type} if
there is a bijection between them that is order
preserving~\cite[pp.~50--51]{set_theory}.  Goodman and
Pollack~\cite{goodman_pollack} extend this concept to finite, multidimensional
sets.  They define that two $d$-dimensional point sets $S_1$ and $S_2$ have the
same \emph{point set order type} when there exists a bijection $\sigma$ between
the sets such that each $(d+1)$-tuple in~$S_1$ has the same orientation (i.e.,
the side of the hyperplane defined by $p_1 \dots p_d$ on which the point
$p_{d+1}$ lies) as its corresponding tuple in $S_2$. It is also common to
consider two point sets to be of the same order type if all orientations are
inverted in the second set. In the plane, this means that for two sets of the
same order type, the ordered point triple $u$, $v$ and $w$ has the same
orientation (clockwise or counterclockwise) as $\sigma(u)$, $\sigma(v)$,
$\sigma(w)$. The infinitely many different point sets of a given cardinality
can therefore be partitioned into a finite collection of order types. 
The orientations of all triples of the point set determine for any two
given line segments whether they cross. Therefore, the order type
defines most of the combinatorial properties of a point
set.\footnote{It is common to regard the properties defined by
  orientations of triples as the combinatorial ones. There are further
  settings on point sets that can be seen as being combinatorial as
  well, e.g., asking whether the fourth point of a quadruple lies
  inside the circle defined by the first three ones
  (see~\cite{knuth}). Also, the circular sequence of a point set is a
  richer way of describing the combinatorics of point sets, totally
  implying the order type~\cite{circular_sequence}.} For example, its
convex hull, planarity of a given geometric graph (e.g., a
triangulation), its rectilinear crossing number, etc.\ only depend on
the order type. One might wonder whether every (consistent) assignment
of orientations to triples of an abstract set allows a realization as
a point set in the Euclidean plane. This is in general not true, not
even if the assignment fulfills axiomatic requirements. See Knuth's
monograph~\cite{knuth} for a detailed and self-contained discussion of
this topic.

Generalizing classic geometric results to geodesic environments is a
well-studied topic. For example, Toussaint~\cite{relativehull} generalized the
concept of convex hulls of point sets to geodesic environments.
Other topics like Voronoi Diagrams~\cite{geovoro},
Ham-sandwich Cuts~\cite{hamsan}, Linear Programming~\cite{geolin}, etc.\ have
also been covered. However, to the best of our knowledge, the concept of
geodesic order types has not been studied in the literature. Hence, it constitutes 
a natural and general extension to the above results.

The classic order type is often used to identify extremal settings for combinatorial problems on point sets. For example, finding sets which minimize the number of crossings in a complete geometric graph, or maximize the number of elements of a certain class of graphs (spanning trees, matchings, etc.) are typical applications. In a similar spirit, the geodesic order type might be used to investigate extremal properties in geodesic environments. Examples might be problems on pseudo-triangulations (the side chains of a pseudo-triangle are geodesics), guarding problems inside polygonal boundaries (there, shortest paths are geodesics), and related problems; see, e.g., \cite{pt_survey} for a recent survey on pseudo-triangulations.

\subsection{Preliminaries}
A closed polygonal path $P$ is called a {\em simple} polygon if no point of the
plane belongs to more than two edges of~$P$, and the only points that belong to
exactly two edges are the vertices of~$P$. A closed polygonal path~$Q$ is a 
{\em weakly simple} polygon if every pair of points on its boundary separates~$Q$
into two polygonal chains that have no proper crossings, and if the angles of a
complete traversal of the boundary of~$Q$ sum up to $2\pi$~\cite{relativehull}.
Observe that a simple polygon is a weakly simple polygon, but the reverse
is not true. Unless stated otherwise, all polygons are considered to be simple
herein.  We will follow the convention of including both, the interior and the
boundary of a polygon, when referring to it. The boundary of polygon $P$ will be
denoted by~$\partial P$.

The {\em geodesic} $\pi(s,t,P)$ between two points $s,t\in P$ in a simple
polygon $P$ is defined as the shortest path that connects $s$ to $t$, among all
the paths that stay within $P$. If~$P$ is clear from the context, we simply
write $\pi(s,t)$.  It is well known from earlier work that there always exists
a unique geodesic between any two points \cite{m-spaop-96}, even if $P$ is
weakly simple. Moreover, this geodesic is either a straight line segment or a
polygonal chain whose vertices (other than its endpoints) are reflex vertices
of $P$. Thus, we sometimes denote the geodesic as the sequence of these reflex
vertices traversed in the geodesic 
(i.e., $\pi(s,t)=\langle s=v_0, v_1, \ldots, v_k=t \rangle$).
When the geodesic~$\pi(s,t)$ is a segment, we say that $s$ {\em sees}~$t$ (and
vice versa).

For any fixed polygon $P$, a region $C\subseteq P$ is {\em geodesically convex} 
(also called \emph{relative convex}) if
for any two points \mbox{$p,q \in C$}, we have $\pi(p,q,P) \subseteq C$. The 
{\em geodesic hull} (\emph{relative convex hull}) $\CH_P(U)$ of a set $U$ is defined as the smallest (in terms of
inclusion) geodesically convex region $C$ that contains $U$. We will denote by
$\CH(U)$ the standard Euclidean convex hull. Whenever a point $p\in U$ is in the
boundary of $\CH_P(U)$, we say that $p$ is an {\em extreme} point  of $U$ 
(with respect to $P$). The set of all such extreme points is called the 
{\em extreme} set of $U$, and is denoted by $E_P(U)$. 

Although these definitions are valid for any subset $U$ of $P$, in this paper
we will only use them for a finite set of points $S = \{p_1, \dots, p_n\}$.
Further note that the geodesic hull is a weakly simple polygon;
see \figurename~\ref{fig_example}. From now on we assume that the points in
the union of~$S$ with the set $V$ of vertices of $P$ are in {\em strong general}
position. That is, there are no three collinear points, and, for any four
distinct points $p_1,p_2,p_3,p_4 \in S\cup V$, the line passing through $p_1$
and $p_2$ is not parallel to the line passing through $p_3$ and $p_4$. 

\begin{figure}[htb]
  \centering
  \includegraphics[scale=1]{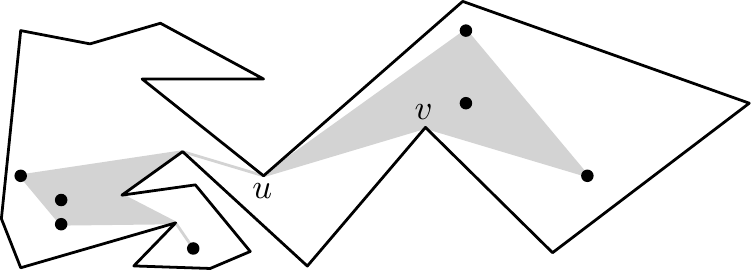}
  \caption{Seven points inside a polygon~$P$ and their geodesic hull (marked
	in gray). Observe that the boundary of the geodesic hull consists of the
	concatenation of the shortest paths connecting the extreme vertices of~$S$, in
	circular order. Further note that a vertex of the geodesic hull that stems from
	$P$ can be a convex and a reflex vertex of the geodesic hull at the same time
	(like vertex~$u$) or only a reflex vertex (like~$v$).} 
  \label{fig_example}
\end{figure}

\subsection{Orientations and Geodesics}
The concept of {\em clockwise order} of a triple of points $(p,q,r)$ naturally
extends to geodesic environments. Let $\pi(p, q) = \langle p=v_0,\ldots, v_k=q
\rangle$ and $\pi(p,r) = \langle p=u_0,\ldots, u_{k'}=r \rangle$ be the
geodesics  connecting $p$ with $q$ and $r$, respectively. Also, let $i>0$ be
the smallest index such that $v_i\neq u_i$. We say that $(p,q,r)$ are in
\emph{geodesic clockwise order} if $(v_{i-1},v_{i},u_i)$ are in (Euclidean)
clockwise order. It is easy to see that, due to the strong general-position
assumption, any triple is oriented either clockwise or counterclockwise in the
geodesic environment. We adopt the common phrasing, and say that $r$ is to the
right of $q$ (with respect to $p$) whenever $(p,q,r)$ are in geodesic clockwise
order (or that $r$ is to the left, otherwise). By definition, if $(p,q,r)$ are
in geodesic clockwise order, then for any $a<i \leq b,c$, the triple
$(v_a,v_b,u_{c})$ must also be in geodesic clockwise order.
Hence, this definition also accounts for the intuitive perception of ``left''
and ``right'' when traversing the geodesics.

Note that ``left'' and ``right'' differ between the geodesic and the
unconstrained setting, since we can use reflex vertices of the surrounding
polygon to ``reorder''
unconstrained point triples. An illustration is
shown in \figurename~\ref{fig:reorder}; in this example, the polygonal chain
crosses two edges of the triangle and the supporting line of the third one. In
general, this operation is not local, and might alter the order type of other
triples (more details of this operation will be given in
Section~\ref{sec:pappus}).

The orientation predicate can also be defined in terms of the
geodesic hull $\CH_P(\{p,q,r\})$. When traversing this hull counterclockwise, the
points appear in that order if and only if their geodesic orientation is
counterclockwise.

\begin{figure}[tb]
  \centering
  \includegraphics{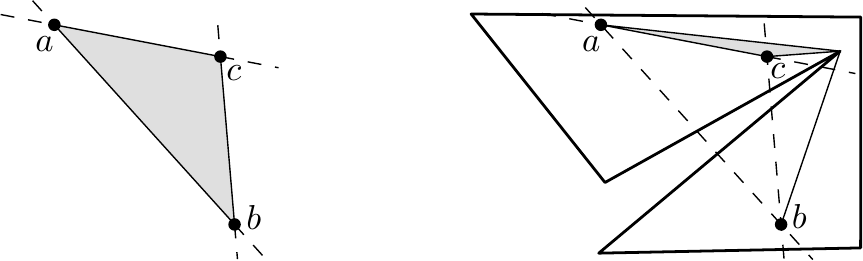}
  \caption{Reordering a triangle using a polygonal chain. The triple $(a,b,c)$
	is in (Euclidean) counterclockwise order. However, upon introducing the polygon
	(right figure) the same triple is now in (geodesic) clockwise order. }
  \label{fig:reorder} 
\end{figure}

\subsection{Contribution}
The triple orientation in geodesic environments extends the one in Euclidean
environments. Since the latter defines the order type of a point set, we obtain
a generalization of point set order types to \emph{geodesic order types}.  
It is easy to see that the order type of a fixed point set~$S$ can change with
different enclosing polygons. In particular, some points that appear in the
(Euclidean) convex hull may not be present in the geodesic hull and, vice
versa, some non-extreme points of~$S$ may appear on the geodesic hull. 

In this paper, we study the ways in which the set of extreme points of a given
set $S$ can change with the shape of the polygon. 
We show that any subset~$\blue$ of four or more points of~$S$ can become the
extreme set of~$S$ (i.e., there exists a polygon $P$ such that $E_P(S)=\blue$).
Moreover, we can make them appear in any predefined order along the boundary of
the geodesic hull. We also characterize when this property is fulfilled for
sets of size~$3$.
Finally, we show in Section~\ref{sec:pappus} that the abstract order types that
can be realized as geodesic order types are a proper superset of the abstract
order types realizable as Euclidean order types. Specifically, we show that the
non-realizable abstract order type derived from Pappus' Theorem via duality can
be realized as a point set inside a polygon. 

Our approach can also be seen as the class of inverse problems to the classic 
questions for geodesic environments, where the polygon is usually part of the input.

\section{Geodesic Hull versus Convex Hull}
\label{sec:geodesic_hull}

In this section, we study how much the geodesic hull of a given point set can
alter from the Euclidean convex hull. We partition $S$ into two sets of blue
and red points ($\blue$ and $\red$, respectively). A set $\blue$ is said to be
\emph{separable} from $\red$ if there exists a polygon with at most~$|\blue|$
convex vertices (i.e., a pseudo-$|\blue|$-gon) that contains all points of
$\red$ and no point of $\blue$ in its interior. From now on, we assume that the
set $S$ is fixed. Thus we omit writing ``from $\red$'' and simply refer to
$\blue$ as a separable point set.
The following theorem draws a nice connection between the separability of point
sets and their geodesic hull. 

\begin{figure}[htb]
  \centering
	\includegraphics[page=3]{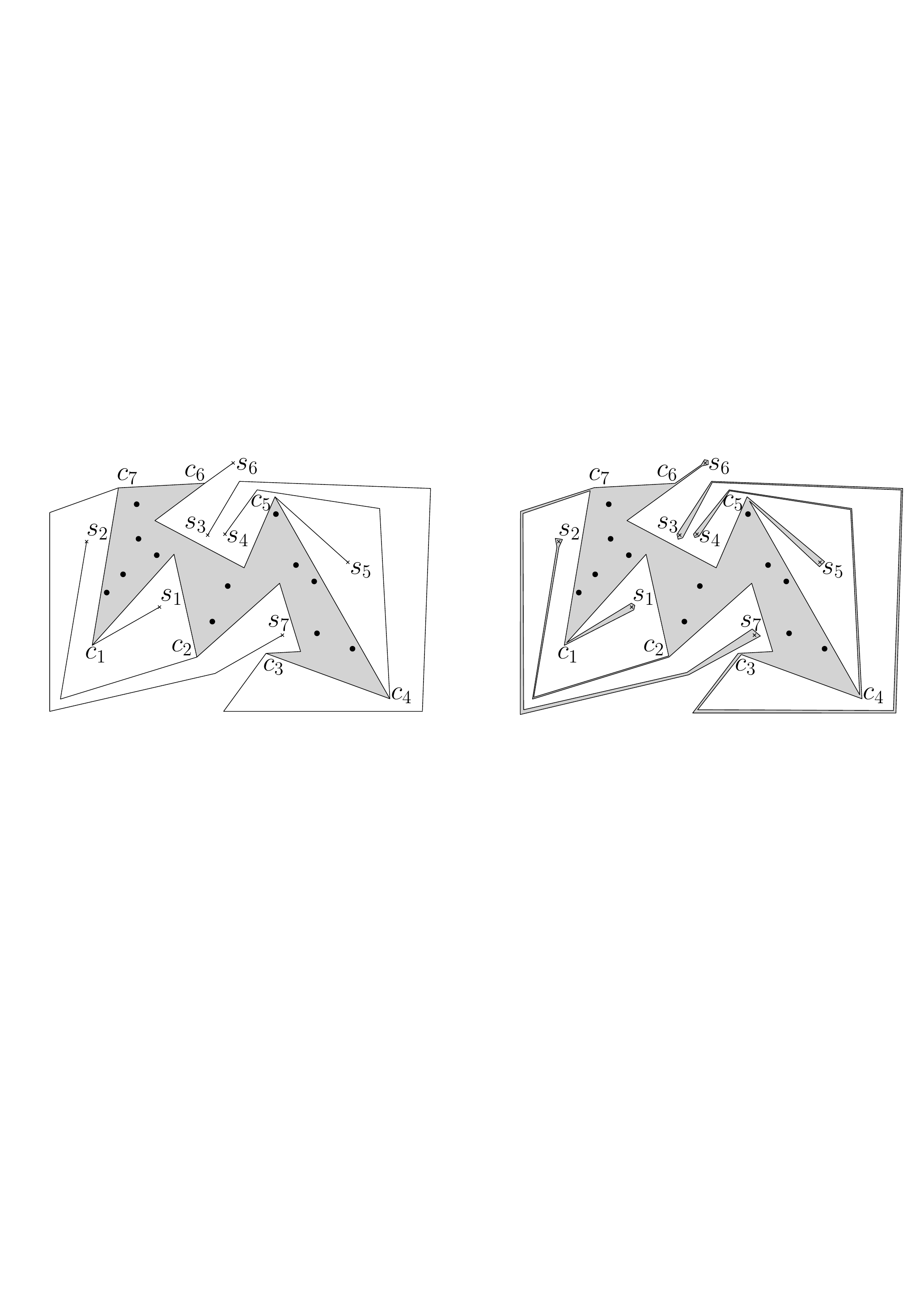}\hspace{2cm}
  \includegraphics[page=4]{fig_separ2}
  \caption{Illustration of the proof of Theorem \ref{theo_hull}: with a
	one-to-one correspondence between the convex vertices $c_1, \dots , c_7$ 
	of $P$ and the points $s_1, \dots , s_7$ of $\blue$, we can obtain a 
	weakly simple polygon $P'$ such that $E_{P'}(S)=\{s_1, \dots , s_7\}$ (left), 
	which then can be transformed to a polygon~(right).}
  \label{fig_separ}
\end{figure}
 
\begin{theorem}\label{theo_hull}
For any separable point set $\blue$ and any permutation $\sigma$ of $\blue$,
there exists a polygon~$P$ such that $E_P(S)=\blue$ and the clockwise ordering
of $\blue$ on the boundary of $\CH_P(S)$ is exactly $\sigma$.
\end{theorem}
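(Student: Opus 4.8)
The plan is to start from a separating pseudo-$|\blue|$-gon $Q$ and turn each of its convex corners into a thin outward ``arm'' that reaches exactly one blue point, so that the blue points become the tips of a star-shaped region whose geodesic hull carries precisely $\blue$ on its boundary. Write $m=|\blue|$. First I would invoke separability to obtain a polygon $Q$ with at most $m$ convex vertices that contains all of $\red$ in its interior and leaves all of $\blue$ outside; if $Q$ has fewer than $m$ convex vertices I would create the missing ones by pushing out small bumps on a concave chain at places avoiding the finitely many points of $\blue$, so that $Q$ still separates and now has exactly $m$ convex vertices $c_1,\dots,c_m$ in clockwise order.

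Next I would fix the correspondence dictated by $\sigma$: attach to $c_i$ an arm ending at the blue point that $\sigma$ places $i$-th in clockwise order. Since non-crossing arms attached at $c_1,\dots,c_m$ leave the body in this cyclic order, the tips will appear on the hull boundary in exactly the order $\sigma$. The heart of the construction is therefore to route $m$ pairwise disjoint simple arcs in the exterior of $Q$, the $i$-th from $c_i$ to its prescribed blue point. Here I would use that, viewed on the sphere, the exterior of $Q$ is a topological disk $D$ containing the point at infinity, and that removing a simple arc joining a boundary point of a disk to an interior point leaves the region connected (such a slit does not separate). Routing the arcs one at a time and keeping them within a bounded region, so that $\infty$ is never cut off, each new arc lives in the still-connected remainder and avoids the previously drawn arcs; this yields the disjoint arcs and, crucially, keeps the complement of $Q\cup(\text{arcs})$ connected.

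I would then thicken the arcs into thin slivers ending at, or just beyond, the corresponding blue points, obtaining a weakly simple polygon $P'$. Now the verification: since the complement of the body-with-arms stays connected and contains $\infty$, its boundary is a single simple closed curve carrying every arm tip, so each blue point is extreme and the tips occur in clockwise order $\sigma$. For the remaining points, I would argue that the portion of the geodesic hull boundary between two consecutive tips runs down one arm to $c_i$, along the concave side chain of $Q$ from $c_i$ to $c_{i+1}$ (which is itself the geodesic between these convex corners), and up the next arm; consequently the hull boundary coincides with the arms together with $\partial Q$, confining every point of $\red$ strictly to the interior. Hence $E_{P'}(S)=\blue$ with clockwise order $\sigma$. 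Finally I would perturb $P'$ into a genuine simple polygon by giving the slivers positive width, which preserves all the qualitative incidences above.

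The step I expect to be the main obstacle is coupling the two requirements at once --- realizing the prescribed cyclic order $\sigma$ by a non-crossing arm system while simultaneously guaranteeing that exactly the blue points, and no red point, become extreme. The order constraint forces which blue point each convex corner must capture, and the extremeness of all tips together with the interiority of all red points must survive that forced routing; the clean way to reconcile them is the connectivity-of-the-complement argument above, together with the observation that the separating polygon's concave chains are exactly the geodesics between its convex corners and thus keep $\red$ inside.
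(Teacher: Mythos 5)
Your proposal is correct and follows essentially the same route as the paper's proof: take the separating pseudo-$|\blue|$-gon, add convex vertices if fewer than $|\blue|$ exist, attach non-crossing chains from the convex vertices to the blue points in the order prescribed by $\sigma$, observe that the resulting weakly simple polygon is exactly the geodesic hull (its side chains being reflex chains, hence geodesics), and finally thicken the chains into narrow passages to obtain a simple polygon. The only difference is that you supply explicit justifications (the slit-connectivity argument for routing disjoint arms, and the tautness of the concave side chains) for steps the paper merely asserts.
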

\begin{proof}
Let $k=|\blue|$ and $P$ be a separating polygon of $\blue$. If $P$ has strictly
less than $k$ convex vertices, we introduce more by replacing any edge~$e$ by
two edges, adding a convex vertex arbitrary close to the center point of~$e$.
Thus, we assume that $P$ has $k$ convex vertices $c_1,\ldots, c_k$.

Let $s_1,\ldots, s_k$ be an arbitrary ordering of the vertices of $\blue$. For
all $i\leq k$, we connect point $s_i\in\blue$ to $c_i$ by a polygonal chain.
Observe that we can always do this in a way that no two chains cross.
Now let $P'$ be the union of $P$ and the polygonal chains; see
\figurename~\ref{fig_separ}~(left). The union of geodesics connecting $s_i$
with $s_{i+1}$ (and $s_k$ with $s_1$) exactly corresponds to the boundary
of~$P'$. Moreover, all points of $\red$ are in the interior of $P'$. Notice
that $P'$ is not a polygon, but a weakly simple polygon. 
As illustrated in \figurename~\ref{fig_separ}~(right), we obtain a polygon from $P'$ 
by transforming polygonal paths into narrow passages of width at 
most~$\epsilon$ (for a sufficiently small~$\epsilon$, and such that no blue 
point sees any other blue point).
\end{proof}

We now study the separability of a point set as a function of its
size. Surprisingly, the separability of the set $\blue$ does not
strongly depend on the set $\red$.

\begin{theorem}\label{theo_five_or_more}
Any set $\blue$ with cardinality $|\blue| \geq 5$ is separable with a polygon
with at most $2|\blue|-2$ vertices.
\end{theorem}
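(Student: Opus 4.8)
The plan is to realise the separating polygon as a pseudo-$k$-gon, where $k=|\blue|$, and to read the vertex bound as a tight budget: a pseudo-$k$-gon has at most $k$ convex vertices, so allowing $2|\blue|-2$ vertices in total leaves at most $k-2$ reflex vertices. The strategy is to let the $k$ convex corners carry the $2\pi$ of total turning of the boundary, while the $k$ concave sides dip inward toward the points. The feature I would exploit is that the pocket enclosed between a concave side and the chord joining its endpoints lies in the \emph{exterior} of the polygon; hence any blue point lying in such a pocket, or just outside a convex corner, is automatically excluded, whereas all of $\red$ is to be kept in the central interior. Separating $\blue$ from $\red$ then becomes the task of routing $k$ concave chains that spend at most $k-2$ reflex vertices in total, so that every blue point falls into a pocket (or outside a corner) while no red point does.

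Concretely, the separation reduces to covering all $k$ blue points by the exterior regions available to a pseudo-$k$-gon with only $k-2$ reflex vertices: the shallow pockets hanging off the concave sides together with the wedges just outside the convex corners. Since a single reflex vertex already produces a triangular pocket that can host more than one blue point, and since a convex corner can be tucked just inside a blue point to exclude it as well, $k-2$ reflex vertices provide enough exterior ``catchments'' to absorb all $k$ blue points while leaving every red point in the central interior; verifying that these catchments can always be aligned with the actual blue positions is what forces the construction. I expect the hypothesis $|\blue|\ge 5$ to enter exactly in this counting-plus-geometry step: too few corners leave too few pockets and wedges to host every blue point in all configurations, which is consistent with a size-$3$ set being only conditionally separable. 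I would set the construction up first for blue points in convex position and then reduce the general case to it, threading the pocket of each interior blue point inward as in the narrow-passage argument behind Theorem~\ref{theo_hull}.

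The step I expect to be the main obstacle is guaranteeing all three requirements simultaneously: that each pocket contains its intended blue point but \emph{no} red point, that the concave chains are pairwise non-crossing so that the polygon stays simple, and that the whole construction never exceeds the vertex budget. These pull against one another, since a pocket deep enough to capture a blue point buried among red points threatens to swallow a red point as well, and deepening or rerouting a pocket to avoid this tempts one to spend extra reflex vertices. I would therefore fix the placement of the convex corners and the order in which pockets are formed (say, from the outermost blue point inward, or along a non-crossing spanning structure) so that each pocket can be made thin enough to separate its blue point from all of $\red$ without meeting another chain, and then verify that the count yields at most $2|\blue|-2$ vertices. Finally, I would check sharpness, namely that $|\blue|\ge 5$ is exactly the regime in which the construction always succeeds, thereby complementing the separate, conditional treatment of the case $|\blue|=3$.
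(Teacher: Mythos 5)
Your budget accounting (a pseudo-$k$-gon with $k$ convex and at most $k-2$ reflex vertices) correctly describes the shape of the final object, but the proposal stops exactly where the proof has to start, and the one quantitative claim you lean on is false in general. You assert that ``a single reflex vertex already produces a triangular pocket that can host more than one blue point,'' but a straight thin pocket is a convex region, and when $\red$ $\epsilon$-densely covers $\blue$ --- precisely the adversarial situation one must survive, and the one exploited in Theorem~\ref{the_nonsepar34} --- any convex region containing three or more blue points contains a red point; two blue points can share a straight pocket only if its axis essentially passes through both, and the paper's case analysis for $|\blue|=4$ shows how such pockets then collide with the requirement that the angular turns of a simple polygon sum to $2\pi$. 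So nothing in your sketch explains why $|\blue|\ge 5$ rescues the counting, and your own ``main obstacle'' paragraph concedes that the simultaneous routing of $k$ non-crossing pockets within the budget is unverified --- that verification \emph{is} the theorem. The structural fact you are missing is the paper's base case (Lemma~\ref{lem_separ5}): any five points contain an \emph{empty} convex quadrilateral $abcd$, and since general position puts no red point on a line through two blue points, one arbitrarily thin corridor can run along $c,d$, bend \emph{once} at the intersection $m$ of the supporting lines of $cd$ and $ab$, and continue through $b$ and $a$; thus a single bent spike captures four blue points, a second parallel spike captures the fifth, and the four spike endpoints, placed far away, span a quadrilateral containing all of $S$, giving a separating pseudo-$5$-gon with at most $8 = 2\cdot 5-2$ vertices.

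The second gap is the bound $2|\blue|-2$ itself. Your plan to ``tuck a convex corner just inside'' each blue point, or to cut a pocket per point, fails the budget for blue points buried deep in the red cloud: a notch cut from a generic boundary point costs $+3$ vertices and $+2$ convex vertices (two base vertices plus a tip), destroying both the total count and the pseudo-$k$-gon property. The paper avoids this with an incremental lemma (Lemma~\ref{lem_add}) that you have no analogue of: to exclude one additional point $q$ from the interior of the current polygon, grow the spike out of an \emph{existing} convex vertex $c$ when $q$ sees $c$ (splitting the angle at $c$: $+2$ vertices, only $+1$ convex), and when $q$ does not see $c$, out of the first reflex vertex $v_1$ of the geodesic $\pi(q,c)$, which $q$ sees by definition (again $+2$ vertices, $+1$ convex). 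This charging of exactly two vertices and one convex vertex per added blue point, combined with the $8$-vertex base case, is the entire source of $2|\blue|-2$; your one-shot construction, and your convex-position-first reduction, have no route to it. Note also that the hypothesis $|\blue|\ge 5$ does not enter as a corner-counting threshold in a simultaneous construction, as you predict, but as the cardinality guaranteeing the empty convex quadrilateral --- which is consistent with four points in non-convex position failing (Theorem~\ref{the_nonsepar34}) while four points in convex position succeed via the same spike construction (Corollary~\ref{cor_4convex}).
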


In order to prove the above theorem, we first consider some simpler cases and then show how to deal with larger point sets.

\begin{lemma}\label{lem_separ5}
Any set $\blue$ of five points is separable.
\end{lemma}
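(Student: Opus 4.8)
The plan is to construct an explicit separating pseudo-pentagon, i.e.\ a simple polygon $Q$ with at most five convex vertices such that every point of $\red$ lies in the interior of $Q$ and no point of $\blue$ does. The decisive feature I would exploit is that reflex vertices are ``free'': only the convex vertices are charged against the budget, which here is exactly $|\blue|=5$, that is, one convex corner per blue point. So the guiding principle is to make each blue point responsible for at most one convex corner of $Q$, placing that corner so that the blue point falls outside $Q$, and to let the (unlimited) reflex chains do the work of scooping all of $\red$ into the interior.

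Concretely, I would start from a ``base'' region that contains $\CH(\red)$ strictly and has as few convex corners as possible, and then carve reflex pockets to expel the blue points, splitting the analysis according to which blue points lie outside $\CH(\red)$ (``exposed'') and which lie inside it (``buried''), and according to $|\CH(\blue)|\in\{3,4,5\}$. If all five blue points are exposed, I would take a large convex pentagon containing all of $S$ and cave in its five sides, assigning one blue point to each side: since an exposed blue point is extreme in some direction with no red point beyond it, the corresponding side can be pushed inward past it as a reflex chain while keeping all of $\red$ inside, costing no convex vertices beyond the five corners. If some blue points are buried, I would instead use a base (pseudo-)triangle around $\CH(\red)$ (only three convex corners) and thread a thin reflex channel from its boundary in to the buried points, so that they end up in the exterior pocket; the channel's internal U-turns around those points are reflex, and only its mouth contributes convex vertices, leaving room within the budget of five.

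The main obstacle, and the reason the bound is delicate rather than automatic, is precisely the buried blue points together with radial interleaving (a red point lying ``beyond'' a blue point in the same direction). A single buried point excluded by a naive channel already spends two convex vertices at its mouth, so treating several of them independently would overrun the budget; the heart of the argument is therefore a global routing that merges these channels and orders the visits to the blue points so that the total convex count never exceeds five and no red point is ever lost. Finally I would verify simplicity of $Q$ together with the two containment conditions. Because $|\blue|=5$ is small, I expect to close the routing step by a finite case analysis over the shape of $\CH(\blue)$ and the few essentially different configurations of the buried blue points and of the surrounding red points.
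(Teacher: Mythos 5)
You have the right cost accounting---reflex turns are free, a naive exterior channel spends two convex vertices at its mouth, so the budget of five forces mouths to be absorbed or channels to be merged---but your proposal stops exactly where the proof has to begin: the ``global routing'' that you defer to an unspecified finite case analysis \emph{is} the content of Lemma~\ref{lem_separ5}, and nothing in the sketch shows it can close. Two concrete problems. First, merging channels does not obviously save anything: at a junction where a thin exterior channel branches into three arms, the polygon-interior sectors between adjacent arms sum to roughly $2\pi$, so at most one of the three junction corners can be reflex and at least two must be convex---a branch costs about as many convex vertices as the mouth it eliminates. Second, your channels must avoid every red point, and $\red$ may $\epsilon$-densely surround each blue point (exactly the regime of Theorem~\ref{the_nonsepar34}, where separability fails for $|\blue|\in\{3,4\}$); ``threading a thin reflex channel'' to a buried point therefore requires a guaranteed red-free direction of approach, which your sketch never produces. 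Your exposed/buried dichotomy relative to $\CH(\red)$ also makes the construction depend on the red set in a way the paper shows is unnecessary.

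The missing routing idea, supplied by the paper, is red-independent and uses a path (bends), not a tree (branches): any five points in general position contain a convex quadrilateral $abcd$ not containing the fifth point $e$, and the supporting lines of $ab$ and $cd$ meet in a point $m$ bounding a wedge that avoids $e$. One thin spike runs along the line through $c,d$ up to $m$ and then bends along the line through $b,a$; since no red point lies on a line spanned by two blue points, a sufficiently thin such spike contains $a,b,c,d$ and no red point, and the single bend at $m$ costs exactly one convex vertex (of the two corners at a bend, at most one can be reflex, and the construction realizes exactly one convex). A second, straight spike through $e$, parallel to the line $ab$ and pointing the opposite way, is red-free by strong general position (no two spanned lines are parallel). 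Extending both spikes until their walls meet two lines $l,l'$ parallel to $ab$ just outside $\CH(S)$ turns the four spike endpoints into the corners of a convex quadrilateral containing $S$, so all four mouth vertices are absorbed into these corners: $4+1=5$ convex vertices, a separating pseudo-$5$-gon. Without an idea of this kind---covering four of the blue points with one bent channel along two blue-spanned lines and hiding every mouth in the outer corners---your plan remains a plausible strategy with its central step, the one you yourself flag as the heart of the argument, unproven.
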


\begin{proof}
It is well-known that any set of five points contains a
convex quadrilateral $abcd$ that does not contain the fifth point~$e$. 
The supporting lines of the edges $ab$ and $cd$ cross due to the
strong general position assumption (analogously for the supporting
lines of $bc$ and $da$). These pairs of supporting lines define two
wedges that contain $abcd$, and at least one of them does not contain
the fifth point~$e$ (since their only region of intersection is the
quadrilateral). W.l.o.g., let this be the wedge defined by the
supporting lines of $ab$ and $cd$ and let $m$ be the crossing point of
its two supporting lines.  Further, assume that $m$ lies on
the ray from $a$ through $b$ and that the supporting line of $ab$
separates $e$ from $c$ and $d$; see
\figurename~\ref{fig_five_separable}~(left). 
We build two narrow polygonal spikes that 
contain the blue points and end sufficiently far away from the point
set. Each spike has a positive aperture angle at its unbounded end
and a sufficiently small aperture such that it does not 
contain any red point.  
The first spike starts on the line through $c$ and $d$ in a way that it contains
$c$, $d$, and $m$.
At~$m$, the spike bends towards $b$ and $a$ (with a
slightly positive aperture angle). 
The second spike 
contains~$e$, has its bisector parallel 
to the supporting line of $ab$, and 
is directed in the opposite direction of
the first spike; see again \figurename~\ref{fig_five_separable}~(left). 
Let $l$ and $l'$ be two lines which are parallel to the supporting line of $ab$
and slightly outside the convex hull of~$S$ (one line on each side). 
Since, by construction, the bisectors of
the (last parts of the) two spikes are parallel to $l$ and $l'$, the pair of
rays emanating from the end of each spike intersect $l$ and $l'$. These intersection points
become the end points of the spikes. Thus, they form a convex quadrilateral
containing all the points of~$S$, implying that the resulting polygon is a 
separating pseudo-5-gon; see \figurename~\ref{fig_five_separable}~(right).
\end{proof}

\begin{figure}[htb]
  \centering
  \hspace{0.2cm}\includegraphics[page=2]{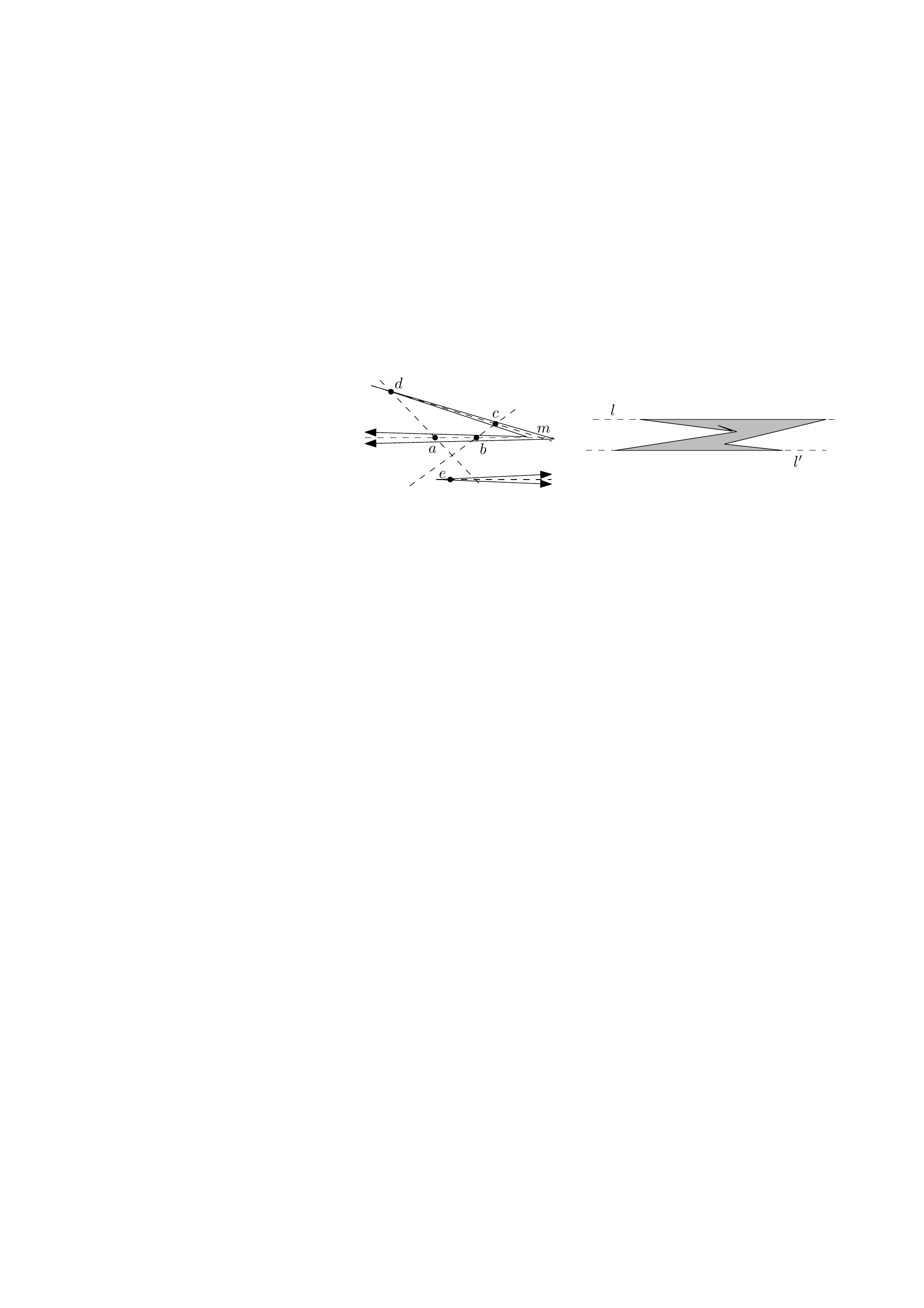}\hspace{2cm}
  \includegraphics[page=3]{five_separable}
  \caption{A set of five points is always separable. A narrow, bent
    spike can be built around the empty convex quadrilateral of the
    set. A second spike is chosen parallel to the first one and in
    opposite direction. Sufficiently far away, the spike end points
    span a quadrilateral around the whole set.}
  \label{fig_five_separable}
\end{figure}

Note that we can use the same construction when $\blue$ consists of four
points in convex position. In such a case, we do not need the second spike
(containing~$e$), and only place one convex vertex of the pseudo-4-gon on the
supporting line of $ab$ in the opposite direction of the spike. If the three
convex points on the convex hull of the pseudo-4-gon are chosen sufficiently
far away from the points of $\blue$, the pseudo-4-gon will always cover the red point set.
This implies Corollary~\ref{cor_4convex}.

\begin{corollary}\label{cor_4convex}
Any set $\blue$ of four points in convex position is separable.
\end{corollary}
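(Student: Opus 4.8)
The plan is to re-use the spike construction from the proof of Lemma~\ref{lem_separ5}, observing that when $\blue$ consists of only four points $a,b,c,d$ in convex position there is no fifth point that needs to be separated, so that a single spike suffices.

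First I would fix the convex quadrilateral $abcd$ and, exactly as in Lemma~\ref{lem_separ5}, consider the supporting lines of the opposite edges $ab$ and $cd$. By the strong general-position assumption these lines are not parallel and meet in a point $m$, and the wedge they bound contains all of $abcd$. Around this wedge I would build one narrow bent spike: it starts on the line through $c$ and $d$ so that it contains $c$, $d$ and $m$, and then bends at $m$ towards $a$ and $b$ with a slightly positive aperture. In this way the single spike captures all four blue points, while its aperture is chosen small enough to avoid every red point.

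The difference to the five-point case is that we no longer need a second spike to isolate the extra point $e$. Instead I would close the polygon by placing a single convex vertex on the supporting line of $ab$, on the side opposite to the spike, and route the boundary from the two endpoints of the spike to this vertex. Letting the two spike endpoints (on lines $l,l'$ parallel to $ab$) together with this third convex vertex recede sufficiently far from $\blue$, the triangle spanned by these three outer convex vertices covers $\CH(S)$ and hence all red points. Together with the convex vertex created at the bend~$m$, the resulting polygon then has at most four convex vertices, i.e., it is a pseudo-$4$-gon.

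The step I expect to be the real obstacle is verifying that this single-spike polygon is genuinely separating: all red points must lie in its interior while none of $a,b,c,d$ does. Concretely, one has to show that ``sufficiently far away'' can be realized simultaneously with the narrowness requirement, namely that the three outer convex vertices can be pushed out far enough to enclose every red point while the spike stays narrow enough (its width tending to $0$) to keep the four blue points on the boundary and to exclude all red points. Since $S$ is finite and in strong general position, such a compatible choice of aperture and vertex placement always exists, which yields the claim.
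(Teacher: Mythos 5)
Your proposal is correct and takes essentially the same route as the paper: the paper also reuses the spike construction of Lemma~\ref{lem_separ5}, observes that the second spike (for the fifth point $e$) is unnecessary, and closes the polygon with a single convex vertex on the supporting line of $ab$ opposite the spike, with the three outer convex vertices pushed far enough out to enclose all red points. No gaps to report.
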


\paragraph{Remark.} The separating polygon used in the above construction is
likely to have a ``bad aspect ratio'', in the sense that its horizontal dilation 
is far larger than the one of the convex hull of the point set.
While examples can be constructed where this cannot be avoided, we note that for subsets 
$\blue$ of cardinality $\geq 6$, we might obtain more elegant separating polygons using 
a different construction.
In essence, that approach removes pairs of points of $\blue$ with thin wedges and uses a 
large enclosing triangle; see \figurename~\ref{fig_separ3} for an example.
The complete construction requires some case analysis on the order type of the point set, 
and is thus omitted.

\begin{figure}[htb]
\centering
\includegraphics[page=2]{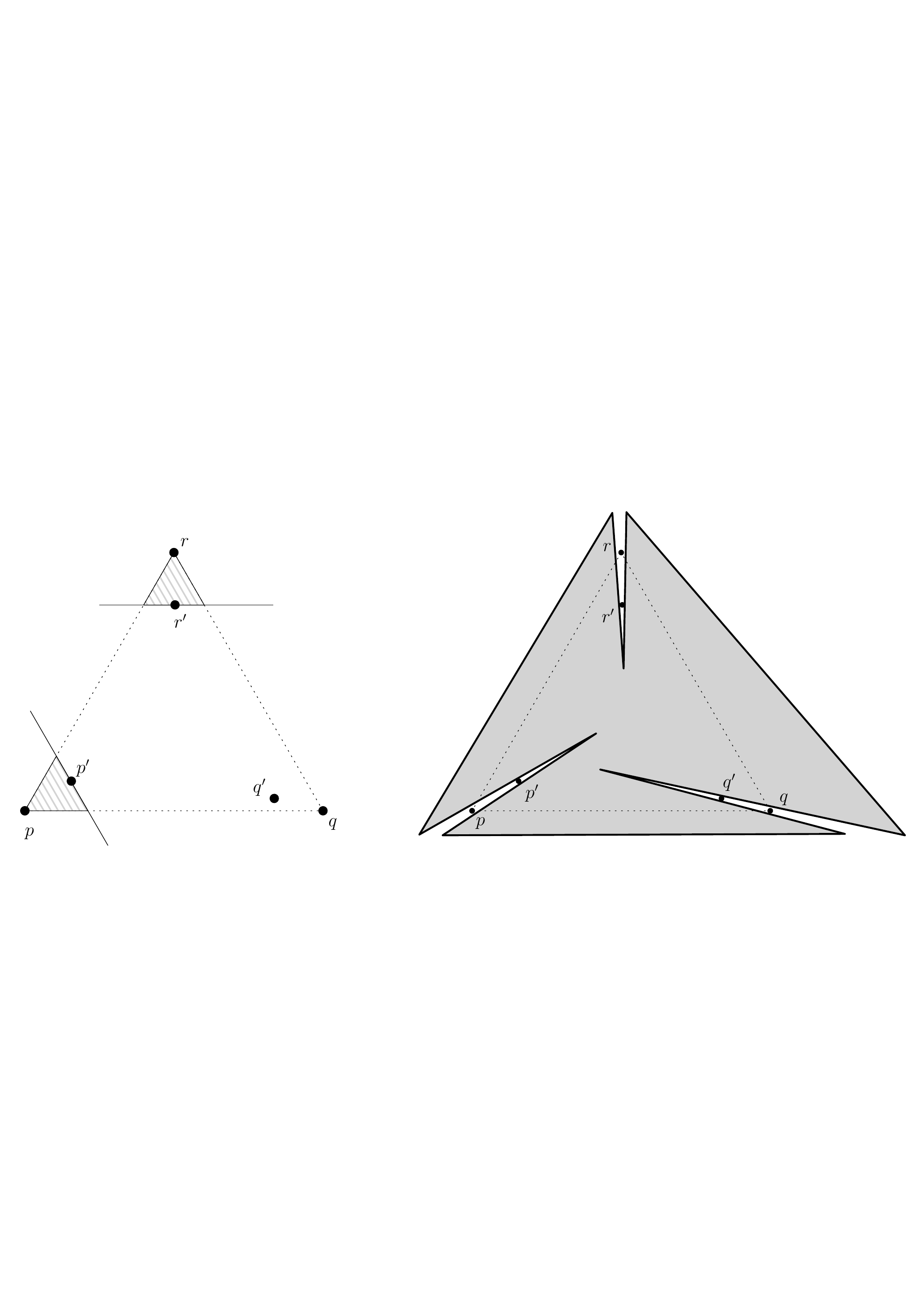}\hspace{2cm}
\includegraphics[page=3]{fig_separ3}
\caption{One case for a construction to obtain a more ``nicely'' shaped polygon for $|\blue| = 6$.}
\label{fig_separ3}
\end{figure}

\begin{lemma} \label{lem_add}
For any set $\blue$ separable by a polygon $P$ and a point $q\not \in S$, the set $\blue \cup \{q\}$ is separable by a polygon $P'$ having at most two more vertices than $P$.
\end{lemma}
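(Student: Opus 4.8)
The plan is to split on whether the new point lies inside~$P$. Since $q\notin S$, the red set is unchanged ($S\setminus(\blue\cup\{q\})=\red$), so I only have to make sure that $q$ is excluded while keeping every red point inside and paying at most two extra vertices. If $q$ is not in the interior of~$P$, then $P$ already contains all of $\red$ and no point of $\blue\cup\{q\}$ in its interior; as $P$ has at most $|\blue|\le|\blue|+1$ convex vertices, I simply take $P'=P$ and add nothing. The entire content therefore lies in the case where $q$ is in the interior of~$P$.

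In that case I would carve $q$ out with a thin finger. First locate a convex vertex $c$ of~$P$ that sees~$q$, i.e.\ with $\overline{cq}\subseteq P$. Then take the segment from $c$ straight through~$q$, extend it slightly past~$q$ to a tip~$t$, and thicken this slit into a narrow passage of width at most~$\epsilon$, using exactly the thickening device from the proof of Theorem~\ref{theo_hull}. For $\epsilon$ small enough the finger stays inside~$P$ (because the open segment $\overline{cq}$ and its short extension to~$t$ lie in the interior), it avoids every red point (each is at a fixed positive distance from the segment), and it has~$q$ in its interior, so $q$ ends up strictly outside the resulting polygon~$P'$, while all red points remain inside.

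The reason this stays within budget is the bookkeeping at~$c$. Cutting a thin finger from the relative interior of an edge would create two new convex ``mouth'' corners together with a reflex tip, that is, two additional convex vertices, which is one too many for a pseudo-$(|\blue|+1)$-gon. Launching the finger from the convex vertex~$c$ instead splits the interior wedge at~$c$, which is convex and hence smaller than~$\pi$, into two subwedges that are both still smaller than~$\pi$. Thus $c$ is replaced by exactly two convex vertices $c_1,c_2$ (sitting on the two edges incident to~$c$), while the thin tip~$t$ becomes a single reflex vertex; concretely the boundary chain around~$c$ becomes $c^-\to c_1\to t\to c_2\to c^+$. The net effect is exactly one additional convex vertex and two additional vertices overall, so $P'$ is a pseudo-$(|\blue|+1)$-gon separating $\blue\cup\{q\}$, as required.

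The step I expect to be the crux is guaranteeing the launch vertex, namely that \emph{some convex vertex of~$P$ is visible from~$q$}; this is precisely what makes the tight ``$+1$ convex, $+2$ total'' accounting possible, since any finger launched from an edge interior would overshoot the convex-vertex budget. I would establish it from the visibility polygon $V(q)$: it is a simple polygon with $q$ in its interior, so it has at least three convex vertices, and one verifies that a convex vertex of $V(q)$ is always a convex vertex of~$P$ that is visible from~$q$ (the reflex vertices of $V(q)$ being the grazed reflex vertices of~$P$, and every window endpoint being reached, along its solid edge piece, by such a convex vertex of~$P$). Hence a suitable~$c$ exists. The remaining points — choosing $\epsilon$ below the clearance to both $\red$ and $\partial P$, and perturbing the finger slightly so that $q$ lands strictly in its interior — are routine.
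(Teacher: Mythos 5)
The crux you flagged is exactly where the argument breaks: it is \emph{not} true that some convex vertex of $P$ is visible from an arbitrary interior point~$q$, and your visibility-polygon justification is incorrect. The convex vertices of $V(q)$ need not be convex vertices of~$P$: the far endpoint of a window is in general a convex vertex of $V(q)$ lying in the relative interior of an edge of~$P$ (already in an L-shaped polygon with $q$ in one arm, the window hinged at the reflex corner ends at such a point), and your patch---that the solid edge piece through a window endpoint leads to a visible convex vertex of~$P$---also fails, since that edge piece can terminate at another window or at a reflex vertex. Worse, the existence claim itself is false: in a pinwheel-style polygon all convex vertices sit at the tips of thin, tangentially bent slits, and an interior point near the center sees only reflex (hinge) vertices and edge interiors; this is even compatible with $P$ being a pseudo-$k$-gon, e.g.\ a pseudo-triangle whose three corners lie at the ends of narrow curved slits between its concave side chains. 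Since, as you observe yourself, launching the finger from the relative interior of an edge costs two convex vertices and blows the budget, your proof has no fallback in exactly this case.

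The paper's proof handles this situation with a different device: pick \emph{any} convex vertex $c$; if $q$ sees $c$, proceed as you do; otherwise take the geodesic $\pi(q,c)=\langle q=v_0,v_1,\ldots,v_k=c\rangle$ with $k\geq 2$ and launch the spike from $v_1$, which $q$ does see and which is a reflex vertex of~$P$. Because $v_1$ is an interior vertex of a geodesic, the interior angle $\angle q v_1 v_2$ exceeds $\pi$, so when the spike splits the reflex angle at $v_1$, the portion on the $v_2$ side remains reflex; the operation therefore creates only one new convex vertex (and two new vertices in total, counting the reflex tip), preserving the pseudo-$(|\blue|+1)$-gon budget. Your handling of the case $q$ outside $P$ and your accounting in the convex-visible subcase agree with the paper; the reflex-vertex fallback, driven by the geodesic angle condition, is the missing idea, and without it the lemma as you argue it is unproven.
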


\begin{figure}[htb]
  \centering
  \includegraphics[page=3]{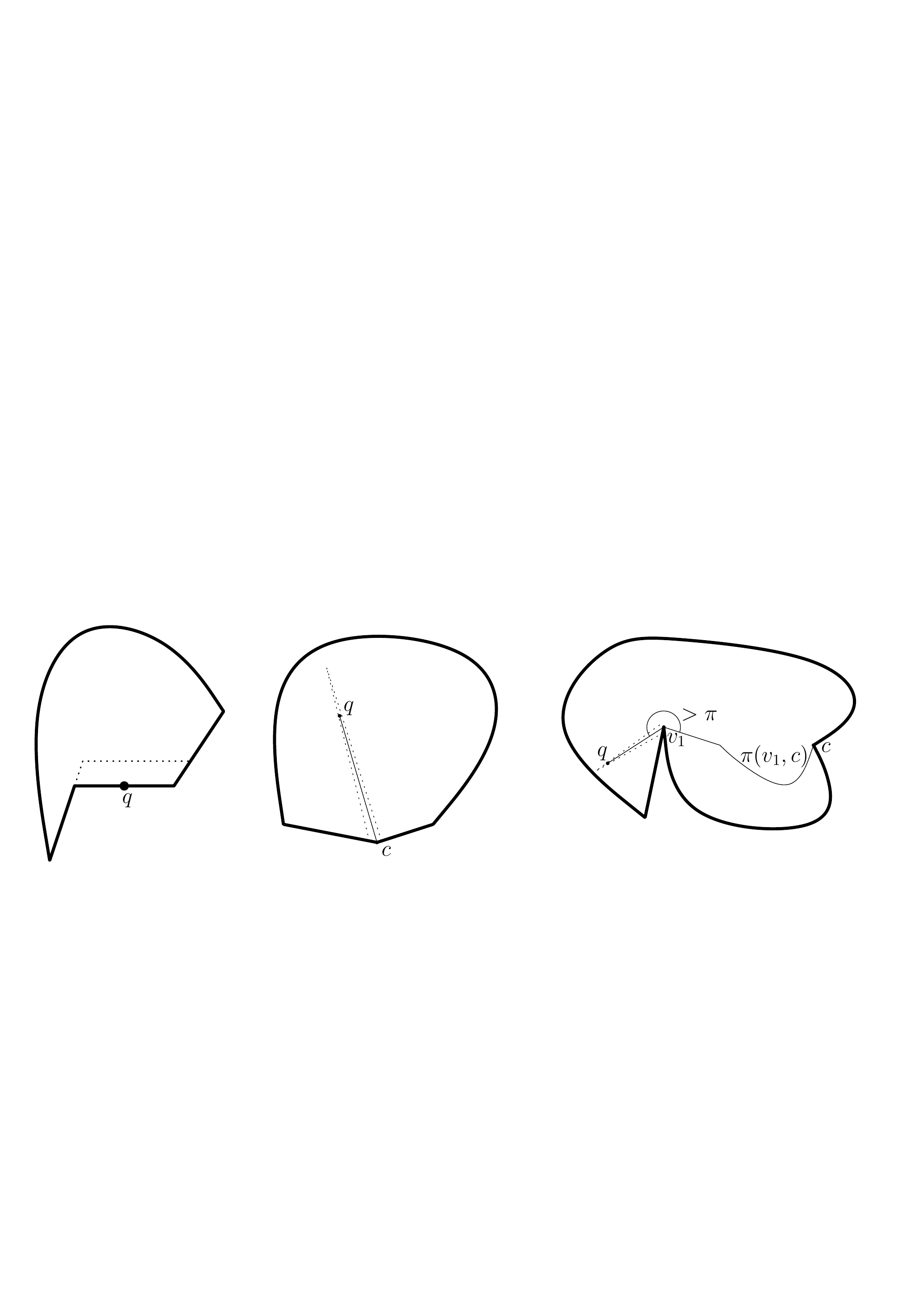}\hspace{1cm}
  \includegraphics[page=4]{fig_add}\hspace{1cm}
  \includegraphics[page=5]{fig_add}
  \caption{Proof of Lemma \ref{lem_add}. Regardless of whether $q$ is in
	$\partial P$, $q$ sees $c$ or a reflex vertex $v_1$, we can separate $\blue\cup
	\{p\}$. In all of the above cases, at most one convex vertex is added to $P$
	(as well as two edges).}
  \label{fig_add}
\end{figure}

\begin{proof}
Let $P$ be the polygon that separates $\blue$. Clearly, if $q\not\in P$, the
same polygon separates $\blue \cup \{q\}$. If $q\in\partial P$, it is easy to
do a small perturbation to $P$ such that $q$ is not contained in $P$ anymore;
see \figurename~\ref{fig_add} (left). 

Thus, we assume that $q$ is in the interior of $P$. Let $c$ be any convex
vertex of $P$.
If $q$ sees $c$, we remove $q$ from $P$ by adding a small spike emanating
from~$c$ towards~$q$; see \figurename~\ref{fig_add} (middle). In this operation
we replace a single convex vertex with two. Since we also increased the size of
the set by one, the separability invariant still holds. 

It remains to consider the case in which $q$ does not see $c$. Then,
the geodesic connecting $q$~and~$c$ is of the form $\langle
q=v_0,v_1,\ldots,v_k=c\rangle$ for some $k\geq 2$. By definition, $q$
sees $v_1$ and the interior angle $\angle{pv_1 v_2}$ is larger
than~$\pi$ (otherwise we could connect~$q$ towards~$v_2$ directly). In
this case we replace a reflex vertex with two vertices, but only one
of them will be convex; see \figurename~\ref{fig_add} (right).
\end{proof}

The class of polygons constructed in the proof of Lemma~\ref{lem_separ5}
will never have more than~$8$ vertices. Moreover, by Lemma \ref{lem_add}, each
additional point of $\blue$ will add at most $2$ additional vertices to the
separating polygon. In particular, we will always have a separating polygon $P$
whose number of edges is at most $2|\blue|-2$, which completes the proof of
Theorem~\ref{theo_five_or_more}. 

By definition, any point set of size $1$ or $2$ cannot be
separated (since we cannot construct a simple polygon with one or two convex
vertices). 
Hence, it remains to consider the cases in which $|\blue|\in \{3,4\}$.
Let $d$ be the shortest distance between any pair of blue points.
We say that a set $\red$ $\epsilon$-\emph{densely} covers $\blue$ (for any
$\epsilon >0$) if any wedge emanating from $p\in\blue$ and not containing any
point of $\red$ inside a circle with center $p$ and radius $d/2$ has an opening angle of at most $\epsilon$.
Observe that, if
$\red$ $\epsilon$-densely covers $\blue$, no point of $\blue$ can appear on the
boundary of $\CH(S)$.
Moreover, if $\epsilon\leq \pi/3$, any convex region that contains three or more blue points must contain a red point.
Showing that for any set $\red$ that $\epsilon$-densely covers $\blue$ 
(for some sufficiently small $\epsilon$), $\blue$ cannot be separated from $\red$, 
we obtain the following result.

\begin{theorem}\label{the_nonsepar34}
For any set $\blue$ of three points or four points in non-convex position, 
there exists a set $\red$ such that $\blue$ is not separable from $\red$. 
\end{theorem}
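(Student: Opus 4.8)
The plan is to let the two observations stated just before the theorem do almost all of the work, together with a short counting argument on the convex hull of the separating polygon. First I would fix $\epsilon$ small enough that $\epsilon\le\pi/3$ (so that the second observation applies) and, in addition, small enough for the quantitative step below, and then build a concrete witness $\red$: around every $b\in\blue$ place red points at radius $d/4$ in directions spaced by $\epsilon/2$, perturbed into strong general position. Since $d/4<d/2$ and the angular gaps are below $\epsilon$, this $\red$ $\epsilon$-densely covers $\blue$; because the covering disks have radius $d/4$, they are pairwise disjoint, and by the first observation no point of $\blue$ lies on $\partial\CH(S)$.

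Assume for contradiction that a pseudo-$k$-gon $P$ (with $k=|\blue|$) separates $\blue$ from this $\red$. Let $T=\CH(\{c_1,\dots,c_k\})$ be the convex hull of the convex vertices of $P$; then $P\subseteq T$ and $\red\subseteq P\subseteq T$. Because each $b$ is surrounded by red points of $T$ in every direction, it lies in the interior of their convex hull and hence in $\operatorname{int}T$; as $b\notin\operatorname{int}P$, every blue point lies in a connected component of $T\setminus P$. Each such component is a \emph{pocket} bounded by one full edge $e$ of $T$ and by a reflex chain of $\partial P$ joining the endpoints of $e$; inspecting interior angles shows that every vertex of a pocket is convex, so each pocket is a convex, red-free region, and there are at most $k$ of them.

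The heart of the argument is an \emph{empty-wedge lemma}: a blue point $b$ in a pocket $Q$ whose base edge $e$ has length $\ge D/(k-1)$ (with $D=\operatorname{diam}T$) always admits a red-free wedge of opening $>\epsilon$, contradicting the covering once $\epsilon$ is chosen below the resulting constant. Indeed, if $\operatorname{dist}(b,e)<d/2$ then the directions from $b$ meeting the long edge $e$ within distance $d/2$ form a wide wedge whose rays stay in $Q$ and then leave $T$ through $e$, hence are red-free; if $\operatorname{dist}(b,e)\ge d/2$ then the wedge subtended by the long segment $e$ at $b$ stays inside the convex triangle $\CH(\{b\}\cup e)\subseteq Q$ for length $\ge d/2$, and its opening is bounded below in terms of $\ell/D$. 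So blue points occupy only pockets over \emph{short} edges. A convex $k$-gon of diameter $D$ has at most $k-2$ edges shorter than $D/(k-1)$ (else the perimeter would drop below $2D$), and by the second observation each convex red-free pocket holds at most two blue points. This caps the number of separable blue points at $2(k-2)$. For $k=3$ this is $2<3$, an immediate contradiction, which settles the three-point case.

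For $k=4$ the bound $2(k-2)=4$ equals $|\blue|$, so no slack remains and the four points must split as $2+2$ into two short-edge pockets; this is exactly where non-convex position must enter, and it is the main obstacle. The naive attempt fails: two disjoint convex pockets are separated by a line, but a $2{+}2$ line-separation does \emph{not} by itself force four points into convex position, so one cannot conclude directly. I would instead exploit that, by the empty-wedge lemma, the two blue points of each pocket lie along the thin spike's axis, so $b_4$ and its mate are pinned essentially on one line and the remaining pair on a second line; I would then argue that the ``$b_4$ inside the triangle $b_1b_2b_3$'' condition, combined with the two spikes emanating from two short edges of the \emph{same} convex quadrilateral $T$ and with the dense $\red$ being forced to fill the part of $T$ between them, is geometrically inconsistent. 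Making this final step rigorous---ruling out \emph{every} $2{+}2$ placement of a non-convex quadruple in two spikes---is the delicate part; the three-point case, by contrast, is clean.
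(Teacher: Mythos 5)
Your construction of $\red$ and your three-point argument are sound, and for $|\blue|=3$ you take a genuinely different route from the paper: where the paper sums signed angular turns to show the pseudo-triangle cannot close, you count short hull edges via a perimeter bound and apply an empty-wedge lemma to pockets over long edges. That part works because a pseudo-triangle has all three of its convex vertices on its convex hull, so every pocket really is convex. The four-point case, however, has two genuine gaps. The first is your claim that ``inspecting interior angles shows that every vertex of a pocket is convex.'' This is false for a pseudo-quadrilateral: $\CH(P)$ may be a triangle, in which case only three of the four convex vertices of $P$ are hull vertices and the fourth lies in the interior of a pocket chain, making that pocket non-convex. Both pillars of your counting---the empty-wedge lemma (which needs segments from $b$ to the base edge to stay inside the pocket) and the bound of at most two blue points per convex red-free pocket (which applies the convex-region observation to the pocket itself)---rely on convexity of the pocket, so this configuration is simply not covered by your framework. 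It is not a marginal omission: the paper explicitly isolates it (``$P$ can have at most one single pocket that is non-convex'') and devotes its longest and most delicate sub-case to it, comparing the angle $\alpha$ between the blue segments $b_1b_2$ and $b_3b_4$ with the aperture $\epsilon$ to exhibit a forbidden red-free wedge at $b_3$.

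The second gap you concede yourself: your bound of $2(k-2)$ separable blue points is tight at $k=4$, so the $2{+}2$ split into two pockets remains, and this is exactly where the hypothesis of non-convex position must do the work (it has to, since four points in convex position \emph{are} separable by Corollary~\ref{cor_4convex}). Your sketch for excluding it (points ``pinned'' on two lines, a configuration that is ``geometrically inconsistent'') is not an argument, and you acknowledge that making it rigorous is open. For comparison, once all pockets are convex the paper closes this case briefly: by non-convexity of $\blue$, the two blue-containing pockets cannot share an endpoint, and then the convex hull of the four pocket endpoints cannot contain all of $\red$, contradicting that $P$ separates. As it stands, your proposal proves the theorem only for $|\blue|=3$.
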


\begin{proof}
We claim that for any set $\red$ that $\epsilon$-densely covers $\blue$ (for 
some sufficiently small $\epsilon$), $\blue$~cannot be separated from $\red$. 
Assume that this is not true, and let $P$ be a separating polygon. 
Since the red set $\red$ is $\epsilon$-dense, every blue point 
must be inside a pocket of $P$
(where a pocket is a simple polygon defined by an edge of $\CH(P)$ and the sub-sequence of edges of $P$ between the two vertices of that edge).

If $|\blue| = 3$, the separating polygon has to be a pseudo-triangle,  
and every pocket is a side chain of $P$.
We define the \emph{aperture} of a side chain as the inner angle between the supporting lines of the first and the last edge of the side chain.
Since the red set is $\epsilon$-dense, at most two blue points can be separated
via the same side chain, and thus there must be at least two side chains 
enclosing blue points. Moreover, any such side chain has an aperture angle
of at most~$\epsilon$.
Consider the angular turn at a vertex of $P$, i.e., the signed angular change of direction when traversing the boundary of $P$. Recall that the sum of the angular turns a simple polygon is $2\pi$ and observe that due to the aperture of the pockets, the sum of the angular turns of the two pockets containing at least one blue point is $(-2 \pi) + 2 \epsilon$.
The angular turns of the three convex vertices can only add an amount strictly smaller than $3 \pi$ to that sum, which implies that we would need a fourth convex vertex to close the polygonal chain.

We now consider the case $|\blue| = 4$. Recall each pocket of $P$ is associated to a sub-sequence of edges that starts and ends at convex vertices of $P$. Moreover, convex vertices of $P$ in such a sub-sequence (other than the endpoints) correspond to reflex vertices of the pocket (and {\em vice versa}). Since $\CH(P)$ must have at least three vertices, $P$ can have at most one single pocket that is non-convex (and this situation can only happen when $\CH(P)$ is a triangle).

As $\blue$ is non-convex, any pocket containing all four blue points (and no red point) would need at least two convex vertices. This implies that there have to be at least two pockets containing blue points.
Let $k$ be the number of pockets of $P$ that contain blue points, and let $\beta_1,\ldots, \beta_k$ be the number of blue points contained in each pocket (in decreasing order).

Since $\sum_i \beta_i=4$ and $\beta_i\in\{1,\ldots, 3\}$, we distinguish between the following cases (depicted in Figure~\ref{fig_cases}):

\begin{figure}[htb]
\centering
\includegraphics[page=1]{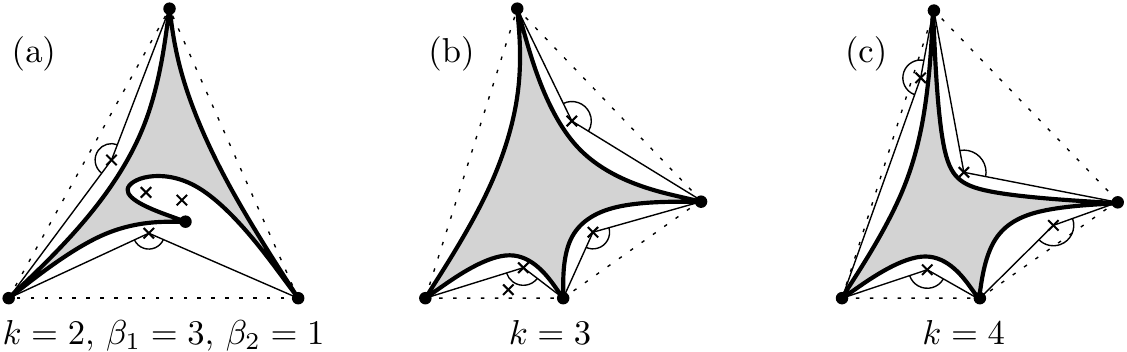}\\[2ex]
\includegraphics[page=2]{fig_cases}
\caption{Scheme of the possible configurations in which we can place four blue points in up to four pockets. In all cases, we obtain a contradiction, hence a separating polygon cannot exist.}
\label{fig_cases}
\end{figure}

\begin{description}
\item[Case $k=2$, $\beta_1=3$, $\beta_2=1$.] This case (depicted in Figure~\ref{fig_cases} (a)) is similar to the case $|\blue| = 3$. 
As a pocket containing three points must have a convex vertex, the convex hull of $P$ 
has three vertices and the two pockets containing the blue points must share a convex hull vertex (i.e., the side chains associated to each pocket share an endpoint).
One of the tree blue points sharing a pocket must see both convex hull vertices of that pocket, and therefore the aperture of that pocket is at most~$\epsilon$.
As before, the sum of angular turns is too small, and $P$ cannot be closed without introducing additional convex vertices.

\item[Case $k=2$, $\beta_1=\beta_2=2$.]
First consider the case in which there is a pocket $Q_1$ that is not convex.
If $Q_1$ does not contain a blue point or it contains a blue point that sees the convex hull vertices of $Q_1$, then we can argue in the same way as in the case where $|\blue| = 3$ (since the two pockets will be consecutive and each will have aperture at most $\epsilon$, see Figure~\ref{fig_cases} (d)).

Otherwise, no blue point sees both convex hull vertices of $Q_1$ (Figure~\ref{fig_cases} (e)).
In this case, we know that both blue points inside $Q_1$ see each other.
Let $Q_2$ be the second pocket containing blue points.
As in the previous cases, we know that the aperture of $Q_2$ is at most $\epsilon$.
Let $u$ and $v$ be the convex hull vertices defining $Q_2$ and let $w$ be the third vertex of the triangular convex hull of $P$.
W.l.o.g., $v$ and $w$ define the pocket $Q_1$. Let $b_1$ and $b_2$ be the two vertices of $\blue$ in the pocket $Q_2$ and let $b_3$ and $b_4$ be the ones in $Q_1$.
Further, let $\alpha>0$ be the smallest angle between $b_1 b_2$ and $b_3 b_4$.
We argue using bounds on the angles of the resulting polygon, see \figurename~\ref{fig_angles}.

\begin{figure}[b]
\centering
\includegraphics{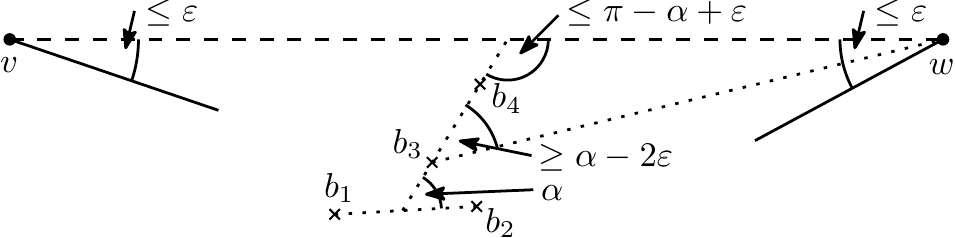}
\caption{Angles used in the proof that no pocket with a convex vertex contains two blue points. Note that alternatively, the supporting line of $b_1b_2$ could intersect $b_3b_4$.}
\label{fig_angles}
\end{figure}

Since the angular turns have to sum up to $2\pi$, we observe that the sum of the inner angles of all convex hull vertices is at most $\epsilon$.
The smallest angle between the convex hull edge $vw$ and $b_1 b_2$ is at most $\epsilon$,
since the aperture of $Q_2$ is at most $\epsilon$, and the inner angle at $v$ is at most $\epsilon$, but in the other direction.
This implies that the angle between $vw$ and $b_3 b_4$ is at least $\alpha - \epsilon$.
In particular, the supporting line of $b_3 b_4$ intersects the segment $vw$ if we choose $2\epsilon < \alpha$.
W.l.o.g., let $b_3$ and $b_4$ be arranged in a way that the ray from $b_3$ through $b_4$ intersects $vw$.
Barring symmetries, we have the situation shown in \figurename~\ref{fig_no_complex}.
Let $c$ be the convex vertex of $P$ in the pocket $Q_2$.
Observe that $c$ has to be in the same closed half-plane defined by the supporting line of $v b_4$ as the edge $vw$, as otherwise $b_4$ sees both $v$ and $w$ or the boundary of $P$ has another convex vertex between $v$ and $c$.
Since $c$ is separated from $b_3$ by the supporting line of $v b_4$,
the interior of the triangle defined by the supporting lines of $v b_4$, $b_4 b_3$, and $b_3 w$ is disjoint from $P$, and has an angle of at least $\alpha - 2\epsilon$ at $b_3$.
However, this contradicts the assumed $\epsilon$-density for a suitable choice of~$\epsilon$.

\begin{figure}[htb]
\centering
\includegraphics{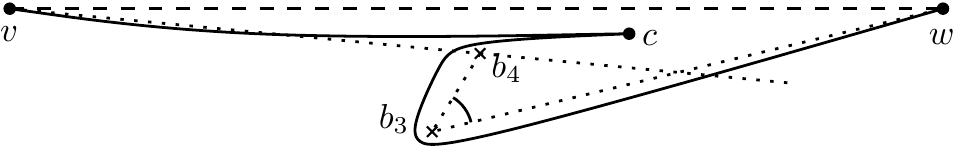}
\caption{No separating polygon can exist with two blue points in a pocket with a convex vertex.}
\label{fig_no_complex}
\end{figure}

It remains to consider the case in which all pockets are convex. By the non-convexity of $\blue$, pockets containing blue points cannot share an endpoint (Figure~\ref{fig_cases} (f)). However, in this case, the convex hull of the four pocket endpoints cannot contain all points of $\red$, implying  that $P$ cannot be a separating polygon.

\item[Case $k \geq 3$.]
Regardless of how many points are on 
the convex hull of $P$, notice that the pockets must share at least two endpoints (Figure~\ref{fig_cases} (b) and (c)),
and that all extreme vertices of $P$ must be pocket endpoints.
As the total aperture angle of the three pockets cannot be larger than $k\epsilon<\pi$, 
the polygon cannot be closed.\qedhere
\end{description}
\end{proof}

\begin{figure}[htb]
  \vspace{-2ex}
  \centering
  \includegraphics[page=2]{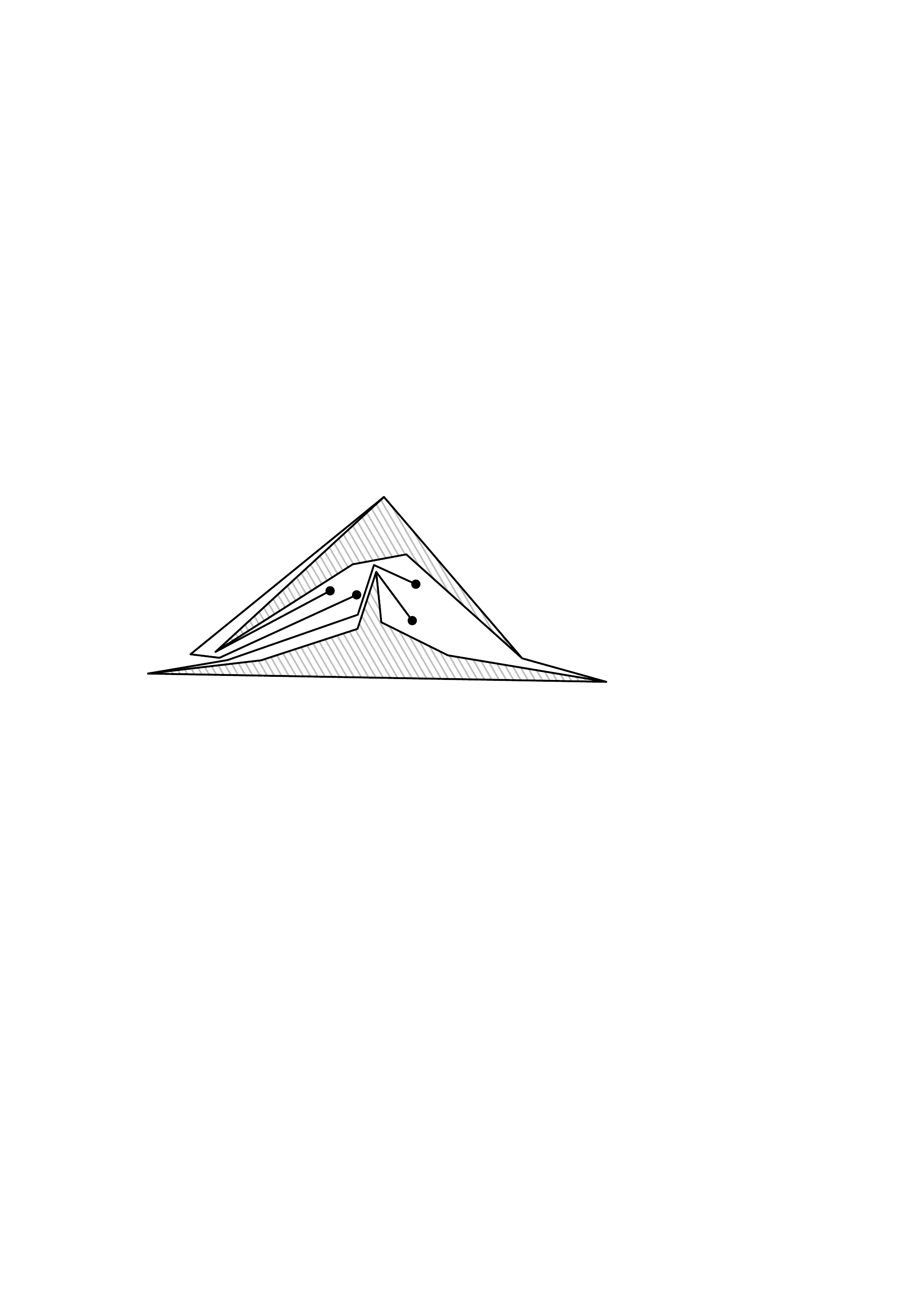}
  \caption{Two pseudo-triangles containing many red
    points such that the four blue points are not separable. However, they are the extreme vertices w.r.t.~some polygon.}
  \label{fig_non_separ_4}
\end{figure}

The example in \figurename~\ref{fig_non_separ_4} shows a point set
where the four blue points lie on the geodesic hull but are not separable.
This implies, in contrast to sets of larger cardinality, that for $|\blue|=4$,
the concepts of separability and geodesic hull are not equivalent.  
Thus, 
we switch back to the geodesic setting and consider 
the remaining cases $|\blue|\in\{3,4\}$.

\begin{lemma}\label{lem_geo_4}
For any set $S$, any set $\blue \subset S$ of four points, and any permutation $\sigma$ of $\blue$, 
there exists a polygon $P$ such that $E_P(S)=\blue$ and the clockwise ordering of $\blue$ on 
the boundary of $\CH_P(S)$ is exactly $\sigma$.
\end{lemma}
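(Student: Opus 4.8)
The plan is to split on whether $\blue$ is in convex position. If the four points are in convex position, then by Corollary~\ref{cor_4convex} the set $\blue$ is separable, and Theorem~\ref{theo_hull} immediately yields a polygon~$P$ with $E_P(S)=\blue$ realizing any prescribed clockwise order~$\sigma$. So the whole difficulty lies in the non-convex case, where one point---say $d$---lies inside the triangle spanned by the other three, $a,b,c$. Here we cannot invoke Theorem~\ref{theo_hull}, since by Theorem~\ref{the_nonsepar34} such a $\blue$ need not be separable at all; the construction must therefore produce the geodesic hull directly rather than through a separating pseudo-$4$-gon.

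For the non-convex case I would build a polygon $P$ whose geodesic hull $H=\CH_P(S)$ is a weakly simple polygon carrying the four blue points as convex vertices, in the cyclic order~$\sigma$, with all red points strictly in its interior. The point is that a convex vertex of $H$ is either a point of $\blue$ or a reflex vertex of~$P$, and only the former count as extreme points. Hence $H$ may---and in the non-convex case must---have additional convex corners, located at reflex vertices of~$P$, which allow $H$ to wrap around and enclose the red points without being a pseudo-$4$-gon (the object that Theorem~\ref{the_nonsepar34} forbids). Concretely, I would realize $H$ as the fusion of two pseudo-triangular pockets meeting near~$d$, arranged so that $d$ becomes the convex apex of the hull between its two designated neighbors in $\sigma$, while $a,b,c$ form the remaining convex apices and the shared non-blue corners (reflex vertices of~$P$) capture all the red points; compare \figurename~\ref{fig_non_separ_4}. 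Once such a hull is laid out as a weakly simple polygon, I would thicken its reflex chains into narrow passages exactly as in the proof of Theorem~\ref{theo_hull}, obtaining a genuine simple polygon $P$ with the same geodesic hull and hence with $E_P(S)=\blue$ in order~$\sigma$.

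The main obstacle is the simultaneous enforcement of two requirements in the non-convex case: forcing the interior point~$d$ onto the hull as a convex vertex, while keeping \emph{every} red point strictly inside the hull. Exposing the reds is precisely what a naive separating approach would do---and what Theorem~\ref{the_nonsepar34} shows cannot be repaired with only four convex vertices---so the crux is to verify that the extra convex corners of~$H$ can always be placed at reflex vertices of~$P$ that trap all red points, for every target order~$\sigma$ (including those that reverse the geometric order of $a,b,c$ or insert $d$ between any chosen pair). I expect this routing of the reflex chains, together with the verification that the resulting geodesics between consecutive blue points hug the constructed boundary so that $H$ is indeed the geodesic hull, to be the technically delicate part; an inductive alternative via Lemma~\ref{lem_add}, growing a separable base, is unavailable here because no three of the four points need be separable.
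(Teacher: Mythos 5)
Your proposal takes essentially the same route as the paper's proof: the convex case is dispatched via Corollary~\ref{cor_4convex} and Theorem~\ref{theo_hull}, and the non-convex case is handled by trapping all red points inside two pseudo-triangles with the interior blue point lying between them (exactly the configuration of \figurename~\ref{fig_non_separ_4}), after which the construction is turned into a simple polygon by the chain-and-narrow-passage technique of Theorem~\ref{theo_hull}. The ``technically delicate part'' you flag is resolved in the paper by a concrete placement: two parallel lines through the extreme points of $\blue$, two four-edge pseudo-triangles with parallel edges and suitably placed reflex vertices covering all of $\red$, merged by a narrow passage between two of their convex vertices, with the four blue points then routed by non-crossing chains to the four remaining convex vertices in the order prescribed by~$\sigma$.
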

\begin{proof}
If the points of $\blue$ are in convex position, then the statement follows
directly from Corollary~\ref{cor_4convex} and Theorem~\ref{theo_hull}.  Thus,
assume that $\blue$ is not in convex position. Consider a line~$l_1$ spanned by
two of the extreme points of $\blue$, and a line $l_2$ that is parallel to
$l_1$ and passes through the third extreme point of $\blue$
(see~\figurename~\ref{fig_relative_hull_4}).  We construct two pseudo-triangles
$P_1$ and $P_2$, each with four edges, with the following properties: (1)~$P_1$
has a convex and a reflex vertex on $l_1$, such that the reflex vertex is
between the convex vertex and both blue points on $l_1$.  (2)~Accordingly,
$P_2$ has a convex and a reflex vertex on $l_2$, such that the reflex
vertex is between the convex vertex and the blue vertex on $l_2$.  (3)~Both,
$P_1$ and $P_2$, have a vertex between~$l_1$ and~$l_2$, and the edges
connecting the convex point on~$l_1$  ($l_2$) to these vertices are parallel.
(4)~The non-extreme point of $\blue$ lies between $P_1$ and $P_2$.  (5)~All red
points lie inside $P_1$ or $P_2$.  Note that these properties can always be
fulfilled, as the convex points of the pseudo-triangles can be far away, and
thus the reflex angles can be made arbitrarily small and the area covered by
the pseudo-triangles can be arbitrarily ``thick''.

As indicated in \figurename~\ref{fig_relative_hull_4}, we can merge the two
pseudo-triangles to form a polygon by adding a narrow passage from a convex
vertex of $P_1$ to a convex vertex of $P_2$.  To obtain our final polygon $P$
with $E_P(S)=\blue$ in the desired order, we proceed like in the proof of
Theorem~\ref{theo_hull}, connecting the blue points to the four convex vertices
of $P_1$ and $P_2$ that were not used for the passage between $P_1$ and $P_2$.
\end{proof}

\begin{figure}[htb]
  \centering
  \includegraphics[page=4]{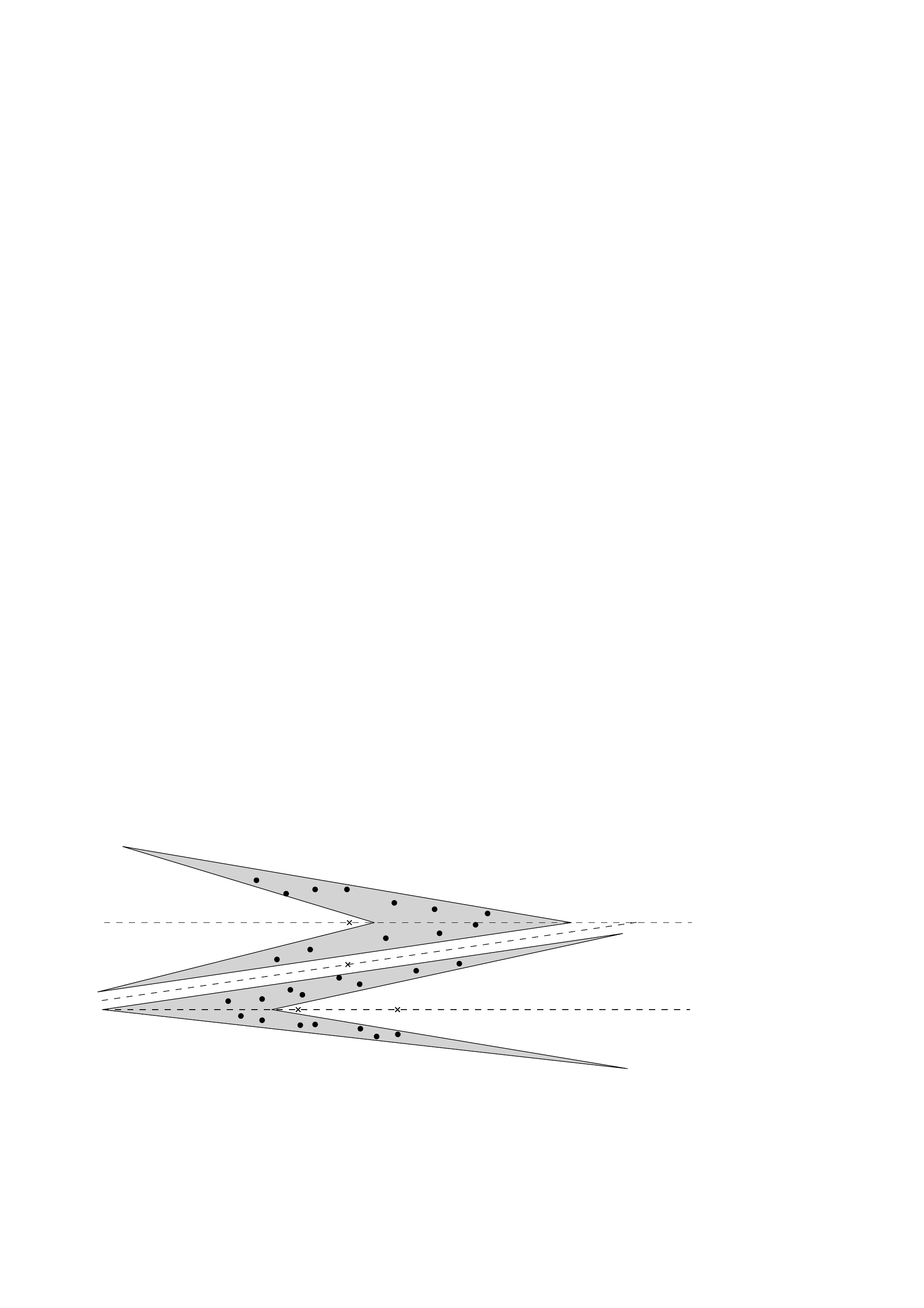}
  \caption{Construction for a polygon $P$ with $E_P(S)=\blue$ based on two
	pseudo-triangles that contain all red points (depicted with dots) and none of
	the blue points (drawn as crosses).}
  \label{fig_relative_hull_4}
\end{figure}

If we combine this result with Theorems~\ref{theo_hull}
and~\ref{theo_five_or_more} we obtain the following statement.

\begin{theorem}
\label{thm:equival3}
For any set $S$, any set $\blue \subset S$ of at least four points, and any permutation~$\sigma$ of~$\blue$, 
there exists a polygon $P$ such that $E_P(S)=\blue$ and the clockwise ordering of $\blue$ on 
$E_P(S)$ is exactly $\sigma$.
\end{theorem}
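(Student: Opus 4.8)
The plan is to prove Theorem~\ref{thm:equival3} by a straightforward case split on the cardinality and configuration of $\blue$, invoking the machinery already assembled in this section. The statement asserts that any $\blue \subset S$ with $|\blue| \geq 4$ can be realized as the extreme set $E_P(S)$ in any prescribed clockwise order $\sigma$, so the key observation is that all the hard work has been done in the preceding lemmas and theorems; what remains is to assemble them.

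First I would dispose of the case $|\blue| = 4$. Here I simply cite Lemma~\ref{lem_geo_4}, which already handles an arbitrary four-point set (whether or not it is in convex position) and an arbitrary permutation $\sigma$, producing the desired polygon~$P$ with $E_P(S) = \blue$ in clockwise order $\sigma$. So this case is immediate.

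Next I would handle $|\blue| \geq 5$. The strategy is to first establish \emph{separability} and then apply Theorem~\ref{theo_hull} to obtain the prescribed ordering. By Theorem~\ref{theo_five_or_more}, every set $\blue$ with $|\blue| \geq 5$ is separable (indeed by a polygon with at most $2|\blue|-2$ vertices). Given that $\blue$ is separable, Theorem~\ref{theo_hull} then guarantees that for the given permutation~$\sigma$ there exists a polygon~$P$ with $E_P(S) = \blue$ whose clockwise ordering on the boundary of $\CH_P(S)$ is exactly~$\sigma$. Chaining these two results settles the case $|\blue| \geq 5$.

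The proof is therefore essentially a bookkeeping argument, and the main obstacle is not a genuine mathematical difficulty but rather ensuring that the two regimes $|\blue| = 4$ and $|\blue| \geq 5$ are glued together cleanly. In particular, the subtlety worth flagging is that for $|\blue| \geq 5$ the realization factors through the notion of separability (via Theorem~\ref{theo_five_or_more} followed by Theorem~\ref{theo_hull}), whereas for $|\blue| = 4$ separability can genuinely fail for non-convex configurations (as Theorem~\ref{the_nonsepar34} and the example in \figurename~\ref{fig_non_separ_4} show), so one cannot route the $|\blue|=4$ case through separability and must instead rely on the direct geodesic construction of Lemma~\ref{lem_geo_4}. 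Making this distinction explicit is the one place where care is required; everything else is a direct citation.
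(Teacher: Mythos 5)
Your proposal is correct and follows exactly the paper's own route: the paper derives Theorem~\ref{thm:equival3} by combining Lemma~\ref{lem_geo_4} for $|\blue|=4$ with Theorem~\ref{theo_five_or_more} (separability for $|\blue|\geq 5$) fed into Theorem~\ref{theo_hull}. Your additional remark that the $|\blue|=4$ case cannot be routed through separability (by Theorem~\ref{the_nonsepar34}) is a correct and worthwhile clarification of why the case split is necessary.
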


We conclude this section by studying what happens when the set $\blue$ has cardinality three.

\begin{theorem}\label{the_separ3}
Let $\blue \subset S$ be a set with $|\blue|=3$ such that $\blue$ spans the geodesic hull of~$S$ 
for some polygon $P$. Then $\blue$ is separable.
\end{theorem}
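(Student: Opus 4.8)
The plan is to exploit the hypothesis directly: $\blue$ already spans the geodesic hull $\CH_P(S)$ for some polygon $P$, so $E_P(S)=\blue$. This hull is a weakly simple polygon carrying all three blue points on its boundary and all of $S\setminus\blue$ strictly in its interior. My strategy is to show that, up to degenerate ``spikes,'' this hull is itself a pseudo-triangle whose three convex corners are exactly the blue points, and then to manufacture a separating polygon by shrinking it slightly inward.

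First I would analyze which vertices of $\CH_P(S)$ can be convex. Its boundary is the concatenation of the three geodesics joining consecutive blue points, and the interior vertices of a geodesic are reflex vertices of $P$; hence a boundary vertex that is not a blue point can be locally convex only in the degenerate way permitted by weak simplicity, namely as the tip of a polygonal chain that doubles back on itself and therefore coincides with a reflex vertex of the hull (cf.\ vertex $u$ in \figurename~\ref{fig_example}). The conclusion of this step is that the only genuinely convex corners of $\CH_P(S)$ are the three blue points, so the hull consists of a pseudo-triangle $\Delta$ with one blue point at each of its three convex vertices, possibly with weakly simple chains dangling from those vertices.

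Second, I would obtain the separating polygon by perturbation. Since each blue point lies on $\partial\Delta$ while every point of $S\setminus\blue$ lies in the interior, shrinking $\Delta$ inward by a sufficiently small amount yields a pseudo-triangle $\Delta'$ that still contains all of $S\setminus\blue$ in its interior but leaves all three blue points outside. A small inward shrink preserves the pseudo-triangle structure (three convex vertices, three concave side chains), so $\Delta'$ has at most three convex vertices and is a separating polygon for $\blue$, as required.

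The main obstacle I anticipate is the first step: rigorously justifying the dichotomy for convex vertices of the hull, i.e., that every convex vertex is either blue or a spike tip forced to coincide with a reflex vertex. This rests on the facts that interior geodesic vertices are reflex vertices of $P$ and that the hull is only weakly (not strictly) simple, so the argument must carefully handle the degenerate chains rather than treat $\Delta$ as an ordinary simple polygon. Once that classification is in hand, the shrinking step is routine, requiring only that the perturbation be smaller than the distance from the red points to $\partial\Delta$.
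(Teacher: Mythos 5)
Your proposal is correct and follows essentially the same route as the paper's own proof: you classify the convex vertices of the geodesic hull (blue points, or spike tips of weakly simple chains that coincide with reflex vertices), conclude the hull is a pseudo-triangle $\Delta$ with the three points of $\blue$ at its convex corners and possibly chains attached, and then shrink $\Delta$ slightly to obtain a separating pseudo-triangle $\Delta'$. Your closing remark about the degenerate chains being the delicate point matches exactly where the paper's argument does its (brief) work, so nothing is missing relative to the published proof.
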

\begin{proof}
Recall that the geodesic hull of~$S$ is a weakly simple polygon 
which has all points of $\blue$ on its boundary, and contains all 
points of $S \setminus \blue$ in its interior.
Moreover, a vertex $v$ of the geodesic hull can only be convex if 
(1)~\mbox{$v \in \blue$}, or
(2)~$v$~is part of some weakly simple polygonal chain and thus 
coincides with a reflex vertex of the geodesic hull.
Thus, as $|\blue|=3$, the geodesic hull must consist of a
pseudo-triangle $\Delta$, possibly with polygonal chains attached to
the convex vertices of $\Delta$, where each blue vertex corresponds to
one convex vertex of $\Delta$; see \figurename~\ref{fig_geodesic_3}.
By slightly shrinking $\Delta$, we obtain a pseudo-triangle~$\Delta'$
still having all points of $S \setminus \blue$ in its interior that
leaves all points of $\blue$ outside. Thus $\Delta'$ is a separating
polygon for $\blue$.
\end{proof}

\begin{figure}[htb]
\centering
\includegraphics[page=2]{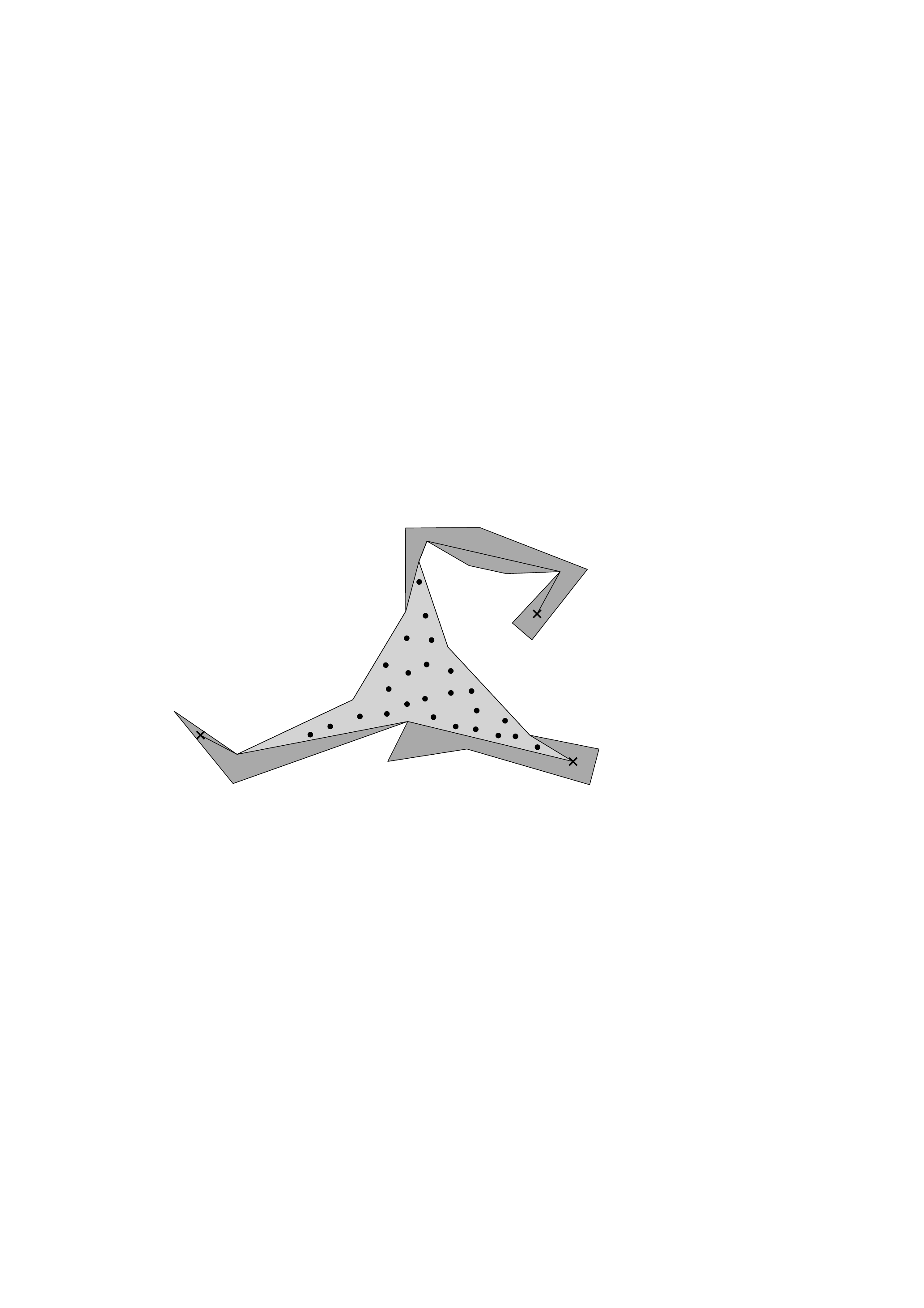}
\caption{A set $\blue \subset S$ with $|\blue|=3$, and a polygon $P$
  (dark shaded) with $E_P(S)=\blue$ (depicted $\times$). The geodesic hull is
  drawn light shaded.}
\label{fig_geodesic_3}
\end{figure}

\begin{table}[htb]
\centering
{\small
\begin{tabular}{r||l|l|l}
$|\blue|$ & Pushable & & Separable  \\
\hline
\hline
$\leq 2$  &  never (Def.)  & $\Leftrightarrow$ &  never (Def.)  \\
\hline
$3$  & not always  &  $\Leftrightarrow$~(Thm.~\ref{theo_hull} and \ref{thm:equival3}) &  not always~(Thm.~\ref{the_nonsepar34})  \\
\hline
$4$  &  always~(Thm.~\ref{lem_geo_4})  & $\Leftarrow$~(Thm.~\ref{theo_hull}) &  convex position: always~(Cor.~\ref{cor_4convex}) \\
     &       &     &  non-convex: not always~(Thm.~\ref{the_nonsepar34}) \\
\hline
$\geq 5$  &  always~(Thm.~\ref{thm:equival3})  & $\Leftrightarrow$ &  always~(Thm.~\ref{theo_five_or_more})  \\
\hline
\end{tabular}
}
\caption{Overview of results and relationship between pushable and separable.}
\label{tab:overview}
\end{table}

Together with Theorem~\ref{the_nonsepar34} the above result implies
that there exist point sets $S$ with $|\blue|=3$ such that $\blue$ can
not be used to define the geodesic hull of~$S$. This is in contrast to
the fact that for any set with $|\blue|\geq4$ this is always possible.
Table~\ref{tab:overview} gives an overview of the obtained results and
also shows the relation between a set being 'pushable' (meaning that
there is a polygon such that $\blue$ is on the geodesic hull) and
'separable' for different cardinalities of $\blue$.

\section{Realizing the Non-Pappus Arrangement}
\label{sec:pappus}
By duality, every set of points in the $d$-dimensional Euclidean space
corresponds to an arrangement of hyperplanes in the same space (see 
e.g.~\cite{edelsbrunner} for details on this mapping). This dual is
incidence and order preserving. When traversing a line $u^*$ in the
plane, the order in which the lines $v^*$ and $w^*$ are crossed gives
the orientation of the corresponding point triple $u, v, w$ in the
primal setting \cite{ordered_duality}. Hence, the crossings in the
line arrangement determine the order type of the corresponding point
set. An \emph{arrangement of pseudo-lines} is a set of simple curves
such that each pair has exactly one point in common, and at this point
the pair crosses. The crossings in the pseudo-line arrangement define
an \emph{abstract order type}. Obviously, if we can stretch the curves
to straight lines without changing the order of all crossings, we
obtain a realization of the order type defined by the crossings. This
has been used in the exhaustive enumeration of point set order
types~\cite{order_type_db}. However, for sets of size 9 or more, it is
known that there exist non-realizable abstract order types (i.e.,
pseudo-line arrangements that are non-stretchable). The example for 9
pseudo-lines is based on the well-known Pappus'
Theorem~\cite{gruenbaum,richter}.

Using the axiomatic system of~\cite[p.~4]{knuth}, one can show that geodesic order types are
in fact a subset of abstract order types, i.e., of those that are
defined by pseudo-line arrangements. Let the predicate $\cc(u,v,w)$ be
true whenever the point triple $(u,v,w)$ is oriented
counterclockwise. We already observed that $\cc(u,v,w) \Rightarrow
\cc(v,w,u)$, $\cc(u,v,w) \Rightarrow \neg \cc(u,w,v)$, and $\cc(u,v,w) \lor
\cc(u,w,v)$. Note that the latter holds since we require all points to
be strictly inside the surrounding polygon.  What remains to show is
that 
\[\cc(x,u,v) \land \cc(x,v,w) \land \cc(x,w,u) \Rightarrow \cc(u,v,w) \quad \mbox{and}\]
\[\cc(a,b,u) \land \cc(a,b,v) \land \cc(a,b,w) \land \cc(a,u,v) \land \cc(a,v,w) \Rightarrow \cc(a,u,w).\]
In other words, if $x$ is left of $\pi(u,v), \pi(v,w),$ and $\pi(w,u)$ then
$\CH_P(\{u,v,w\})$ is given by the sequence $\langle u,v,w \rangle$, and the points to the
left of $\pi(a,b)$ are in transitive radial order around $a$.
For the first of these statements, observe that since
$x$ cannot be on the geodesic hull of the four points, it is inside
the pseudo-triangular region of the hull. Hence, it is easy to see that
the implication is analogous to the Euclidean setting.  For the second
implication, consider the geodesics from $a$ to $u$, $v$, and $w$. If
they split at $a$, transitivity follows from the analogy to the
Euclidean setting. The same is the case if they split at the same
vertex $r$, as~$r$ is reflex.  If, say, $u$ splits first (the other
case is symmetric), it is clear that the orientation of $(a,u,v)$ is
the same as of $(a,u,w)$. It follows that all parts of the axiomatic system are
fulfilled, and therefore all geodesic order types are realizations of abstract order
types (cf.~\cite[pp.~23--35]{knuth}).

\subsection{The Arrangement}
\label{sec:arrangement}
\begin{figure}[htb]
\centering
\includegraphics[page=2]{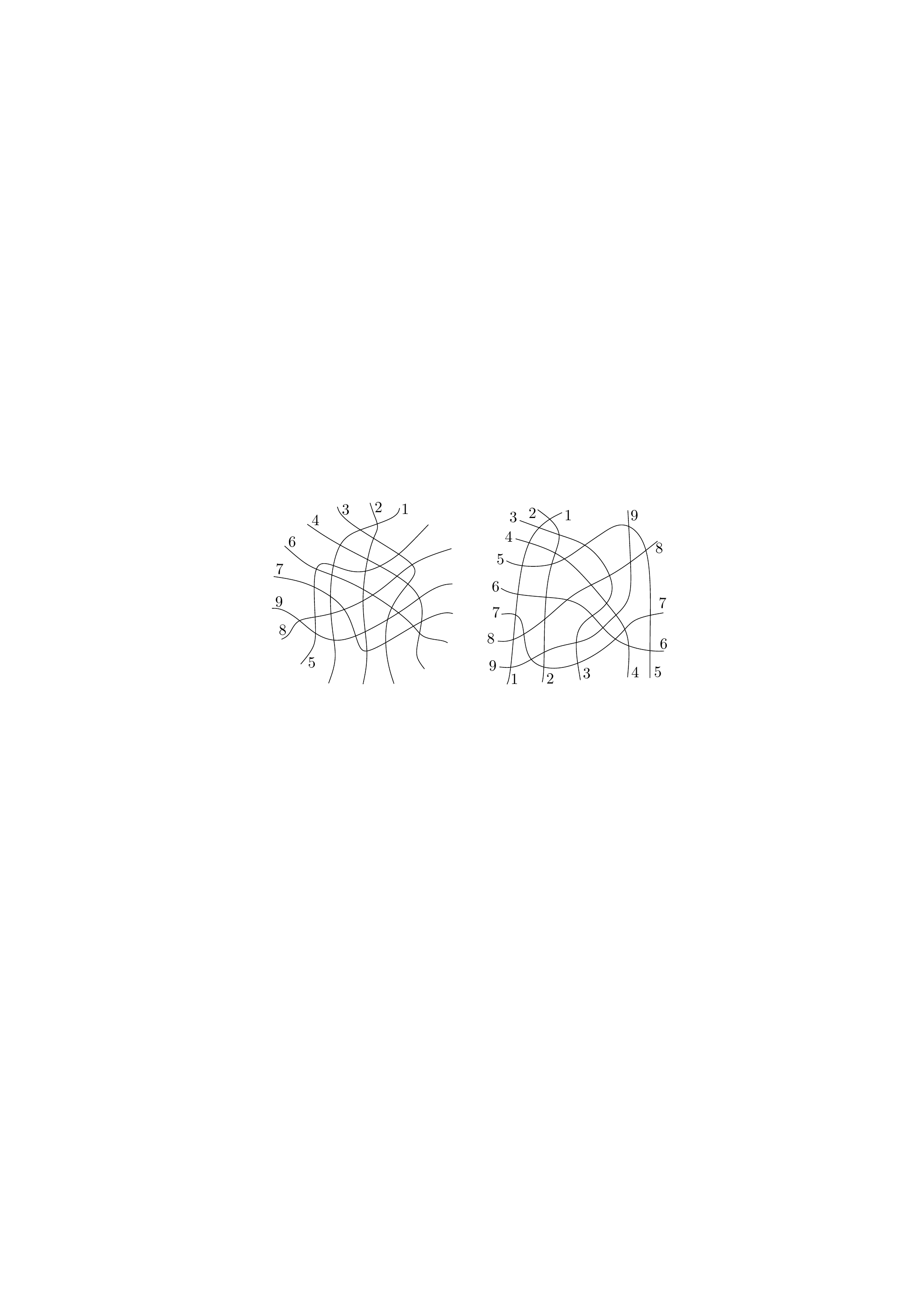}\hspace{2cm}
\includegraphics[page=3]{pappus_arr}
\caption[A pseudo-line arrangement derived from Pappus' Theorem.]{A
  non-stretchable pseudo-line arrangement derived from Pappus'
  Theorem, adapted from~\cite[Fig.~5.3.2]{handbook} (left).  The
  transformed arrangement, having all lines crossing line 1 first
  (right).}
\label{fig:pappus_arr}
\end{figure}

The non-stretchable arrangement whose abstract order type we realize
in the geodesic setting is an adaption from the one shown in
\cite[p.~107]{handbook}; see
\figurename~\ref{fig:pappus_arr}~(left). It is well-known that this
pseudo-line arrangement cannot be stretched and thus the corresponding
abstract order type cannot be realized by a point set.
From the
correspondence between a straight line in the Euclidean plane to a
great circle in the sphere model of the projective plane, it is easy
to see that an arrangement is stretchable in the real plane if and only
if it is stretchable in the projective plane, provided that no pseudo-line in the projective plane coincides with the line at infinity.
We can therefore apply projective transformations to the arrangement
without affecting its realizability. In this way we transform the
arrangement of~\cite{handbook} to the \emph{standard labeling}; see
\figurename~\ref{fig:pappus_arr}~(right) for the resulting
drawing.
Roughly speaking, the crossings of a pseudo-line that happen before
the crossing with $l_1$ are ``moved'' to the other side. Namely, these
are the crossing of $l_9$ with $l_8$ and the crossings of $l_5$ with
$l_9, l_8, l_7$, and $l_6$, in the given order. We do so in order to
make all pseudo-lines cross pseudo-line $l_1$ before any other. In the
primal, this corresponds to $p_1$ being on the convex hull boundary
and points $p_2, \ldots, p_9$ being sorted clockwise around it. Note
that this kind of projective transformation actually preserves the
order type. Table~\ref{tab:pappus_arr_triples} 
shows all 
triples with ascending indices that have counterclockwise orientation
(which easily allows obtaining the orientation of all triples). For example, the
entry ``278'' indicates that pseudo-line $l_2$ crosses $l_8$ before $l_7$,
inducing counterclockwise orientation of the point triple $p_2 p_7
p_8$ in the primal.

\begin{table}
\centering
 \begin{tabular}{|c|c|c|c|c|}
\hline
278 & 345 & 467 & 567 & 678 \\
279 & 348 & 468 & 568 & 679 \\
    & 368 & 469 & 569 &     \\
    & 378 & 478 & 578 & \\
    & 379 & 479 & 579 & \\
    &     &     & 589 & \\
\hline
\end{tabular}
\caption{All ascending counterclockwise point triples derived from
  the arrangement.}
\label{tab:pappus_arr_triples}
\end{table}

\begin{figure}[htb]
\centering
\includegraphics[page=2]{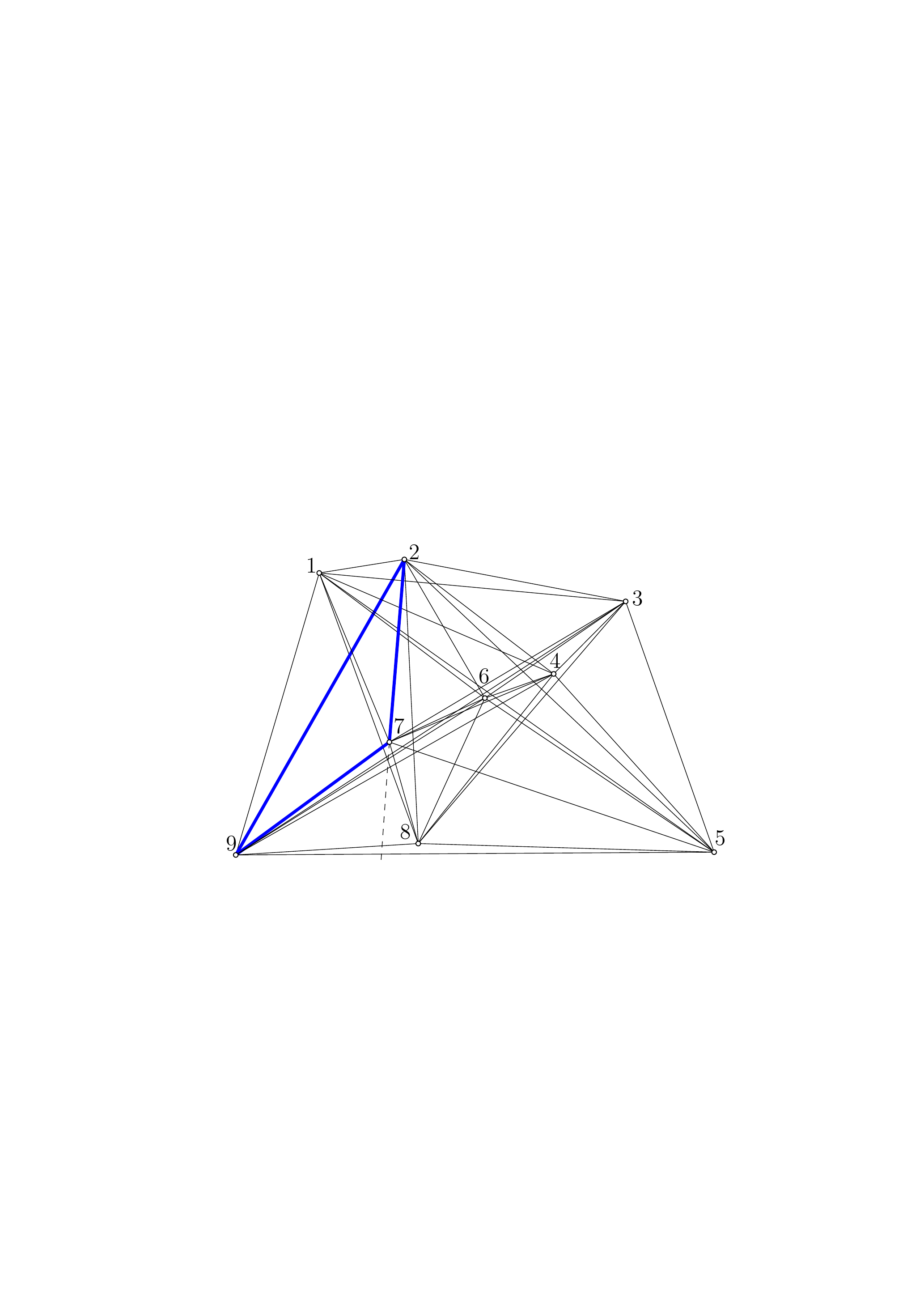}
\caption{A point set that ``almost'' realizes the unrealizable
  arrangement. The point triple
  spanning the thick blue triangle $\Delta p_2 p_7 p_9$ is the one for which the
  orientation is wrong.}
\label{fig:onered1}
\end{figure}

\begin{figure}[htb]
\centering

\includegraphics[page=2]{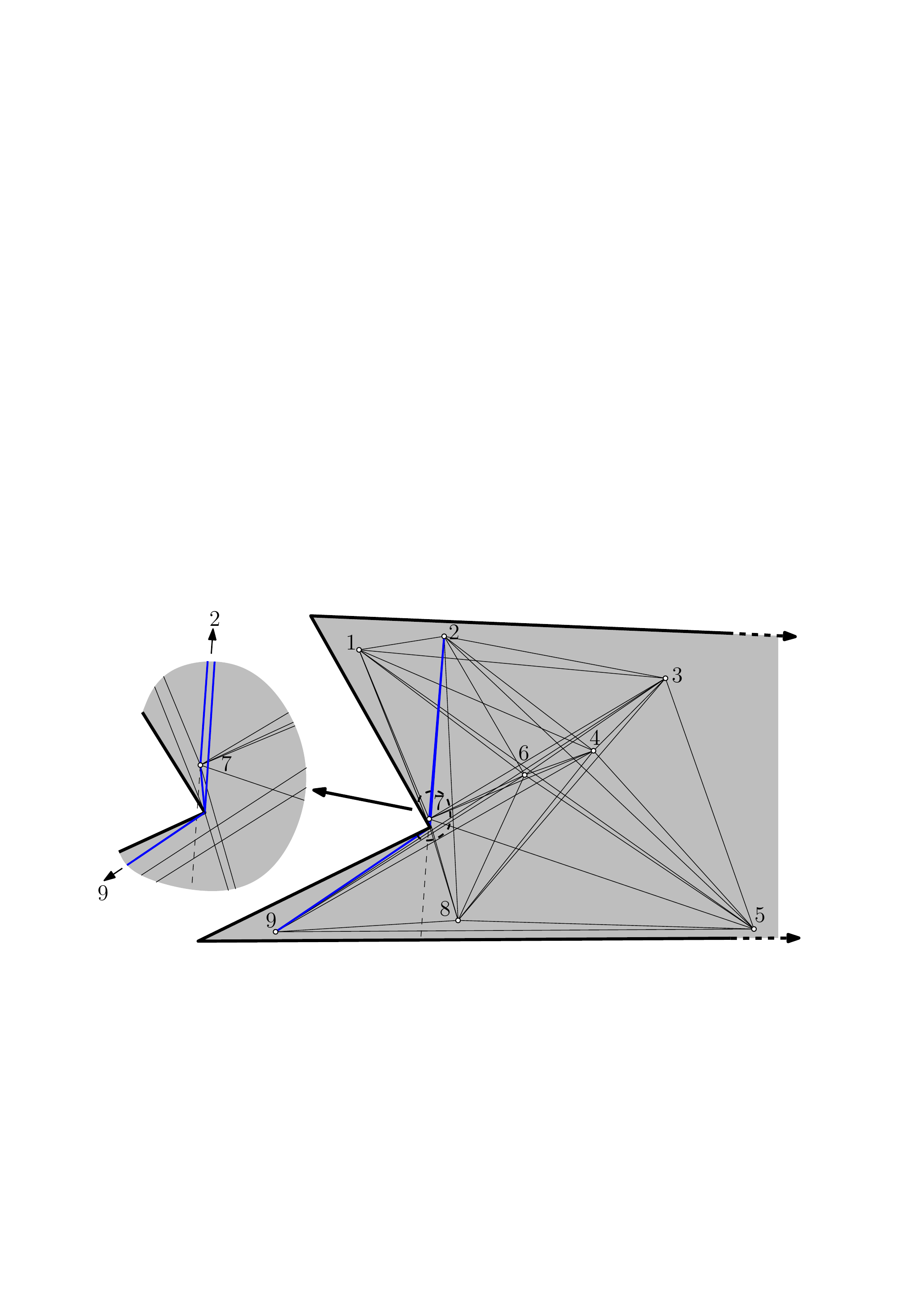}
\caption{A geodesic realization of the arrangement
  (right). The shortest paths between the points are geodesics in the
  interior of the polygon (gray). The region of interest is shown in
  detail in the middle. The polygon closes with a convex vertex far on
  the right side, as indicated.}
\label{fig:onered2}
\end{figure}

\subsection{The Realization}
\label{sec:realization}
Consider the point set $S = \{ p_1, \dots, p_9 \}$ shown in
\figurename~\ref{fig:onered1}. The only triples whose
orientations do not match those indicated by
\figurename~\ref{fig:pappus_arr} are the permutations of $p_2, p_7$
and $p_9$. Equivalently, one can say that the triangle defined by the
three points is the only one that has the wrong orientation among all
triangular subgraphs of the complete graph of~$S$. This triangle is
shown with thick (blue) edges.

We already discussed how reflex vertices of a surrounding polygon can
change the orientation of a triple.  The problem with this tool is
that the polygonal chain is likely to reorder other triangles as well.
In the point set shown in \figurename~\ref{fig:onered1}, this
tool can, however, be applied. We create a polygon $P$ that contains
$S$. The result of the construction is shown in
\figurename~\ref{fig:onered2}.

We cross four edges during this operation. Note that the geodesics
$\pi(p_1, p_9, P)$ and $\pi(p_1,p_8,P)$ are now no longer line
segments, still the order defined by their end vertices has not
changed. The triple $p_2, p_7, p_9$, however, is now oriented
counterclockwise, as demanded by the abstract order type. By checking
all the point triples, the reader can verify that this geodesic order
type indeed realizes the abstract order type of the non-Pappus
arrangement.

\begin{theorem}
  There exists a point set $S$ and a polygon whose geodesic order type
  realizes an abstract order type that is not realizable as a point
  set in the plane.
\end{theorem}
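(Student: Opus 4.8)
The plan is to exhibit the explicit witness pair $(S,P)$ constructed in Sections~\ref{sec:arrangement} and~\ref{sec:realization} and to verify two independent facts: first, that the target abstract order type --- the one read off the non-Pappus pseudo-line arrangement of \figurename~\ref{fig:pappus_arr} --- is genuinely not realizable by any Euclidean point set, and second, that the geodesic order type of $(S,P)$ coincides with it. The first fact is classical: the Pappus arrangement is the standard example of a non-stretchable pseudo-line arrangement, so by the primal--dual correspondence recalled at the start of Section~\ref{sec:pappus} no point set in the plane can have this order type. Since we have already shown that every geodesic order type is an abstract order type, producing a geodesic realization of this particular one immediately yields a geodesic order type that is not a Euclidean one, which is exactly the assertion.

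For the second fact I would start from the Euclidean point set $S=\{p_1,\dots,p_9\}$ of \figurename~\ref{fig:onered1}. The design goal of that set is that, when its $\binom{9}{3}$ triple orientations are compared against Table~\ref{tab:pappus_arr_triples}, every triple already agrees with the arrangement except the three permutations of $\{p_2,p_7,p_9\}$; equivalently, the only ``wrong'' triangle among all triangular subgraphs of the complete graph on $S$ is $\Delta p_2 p_7 p_9$. I would verify this agreement by a direct (finite) orientation check. It then suffices to flip the orientation of this single triple while leaving all others untouched, and this is precisely what the surrounding polygon $P$ of \figurename~\ref{fig:onered2} accomplishes: its reflex chain is threaded through the configuration so that the geodesic $\pi(p_2,p_7,P)$ is deflected to pass on the opposite side of $p_9$, turning the triple geodesically counterclockwise as demanded.

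The delicate part --- and the main obstacle --- is to show that introducing this reflex chain reorders \emph{exactly} the one triple and nothing else; as already warned in the introduction, such a reordering operation is not local in general and can perturb arbitrarily many triples at once. To control this, I would rely on the characterization that the geodesic orientation of a triple is determined by the first vertices at which the relevant geodesics split, so that a triple changes orientation only if the chain separates the two geodesics emanating from the apex in a way that swaps their initial turning direction. I would argue that the chain of $P$ crosses only the four edges indicated in the drawing and is routed so that, among all pairs of geodesics, only those realizing the $\{p_2,p_7,p_9\}$ orientations are affected. The two geodesics that genuinely become polygonal, $\pi(p_1,p_9,P)$ and $\pi(p_1,p_8,P)$, must be checked explicitly: although they are no longer straight segments, the cyclic order of their endpoints --- and hence the orientation of every triple in which they participate --- is preserved, so they contribute no spurious flips. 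Completing this case check over all $84$ triples confirms that the geodesic order type of $(S,P)$ matches Table~\ref{tab:pappus_arr_triples} line for line, and the theorem follows.
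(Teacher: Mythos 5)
Your proposal follows essentially the same route as the paper: you exhibit the point set of \figurename~\ref{fig:onered1} that matches the non-Pappus abstract order type except for the triple $\{p_2,p_7,p_9\}$, introduce the surrounding polygon of \figurename~\ref{fig:onered2} whose reflex chain flips exactly that triple, and confirm by a finite check over all triples (including the now-polygonal geodesics $\pi(p_1,p_9,P)$ and $\pi(p_1,p_8,P)$) that the geodesic order type matches Table~\ref{tab:pappus_arr_triples}, with non-stretchability of the arrangement supplying non-realizability in the plane. This is precisely the paper's argument; your added discussion of why the chain affects only the intended triple just makes explicit the verification the paper delegates to the reader.
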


We note that our construction is minimal; that is, there cannot exist
a point set of nine points and a polygon of fewer vertices (than the
one given in \figurename~\ref{fig:onered2}) that realize the non-Pappus
arrangement.

There are 13 non-stretchable pseudo-line arrangements of 9 lines;
all these arrangements correspond to the same arrangement in the projective plane, i.e., the non-Pappus arrangement~\cite{richter}.
As already mentioned, the sphere model of the projective plane shows that a pseudo-line arrangement in the Euclidean plane is stretchable if and only if the corresponding arrangement in the projective plane is stretchable.
We found one realization for one abstract order type of the non-Pappus arrangement, however, we do not know whether the remaining 12 non-realizable abstract order types are realizable as a geodesic order type as well.

\section{Conclusion}
In this paper, we made a first step into generalizing the concept of
point set order types to geodesic order types. For a selection of four
or more points out of a set $S$, we showed how to construct a polygon
such that exactly these vertices are on the geodesic hull of~$S$, in
any order desired.  To the contrary, this is not always possible for
three points.
We further showed an example of an abstract order type that is not
realizable in the Euclidean plane, but is realizable in geodesic
environments.

Several interesting questions rise from our investigations. Which
bounds on the number of vertices in the polygon that forces the
desired geodesic hull can we derive? What is the complexity of
minimizing the number of vertices? Even though we showed the
realizability of the abstract order type derived from Pappus' Theorem,
we have no general tools to realize order types inside polygons. Can every abstract order type (which is non-realizable in the Euclidean plane) be realized as a geodesic order type? And which of them can be realized in a given polygon?
If not all of them can be realized, does realizability of an order type imply realizability of all abstract order types that correspond to the same pseudo-line arrangement in the projective plane?

\section{Acknowledgements}
The authors would like to thank Prosenjit Bose, Stefan Langerman, and Pat Morin for the introduction of the topic and for several fruitful discussions on it.

\bibliographystyle{splncs03}
\bibliography{bibliography}

\end{document}